\documentclass{myifcolog}

\usepackage{graphicx,soul,latexsym,mathrsfs,graphics,url}
\usepackage{amsfonts,amsmath,amssymb,amsthm,enumerate}
\usepackage{microtype,etex,stmaryrd,booktabs,float,wasysym}
\usepackage[T1]{fontenc}
\usepackage{lmodern}

\newtheorem{theorem}{Theorem}
\newtheorem{example}[theorem]{Example}
\newtheorem{definition}[theorem]{Definition}
\newtheorem{lemma}[theorem]{Lemma}
\newtheorem{fact}[theorem]{Fact}
\newtheorem{proposition}[theorem]{Proposition}

\newtheorem{corollary}[theorem]{Corollary}

\usepackage{hyperref}
\hypersetup{pdfborder = {0 0 0}, breaklinks = true}
\renewcommand{\phi}{\varphi}

\newcommand{\lthen}{\rightarrow}
\newcommand{\Pow}{\mathcal{P}}

\newcommand{\Lng}{\mathcal{L}}

\newcommand{\et}{\wedge}
\newcommand{\Et}{\bigwedge}
\newcommand{\dia}[1]{\langle #1 \rangle}
\newcommand{\union}{\cup}
\newcommand{\Union}{\bigcup}
\newcommand{\then}{\ ; \ }
\newcommand{\ANY}{\mathit{ANY}}
\newcommand{\LNS}{\mathit{LNS}}
\newcommand{\Exp}{\mathit{Ex}}
\newcommand{\gographs}{\mathcal{G}}
\newcommand{\hard}{\blacksquare}
\newcommand{\soft}{\blacklozenge}
\newcommand{\stephard}{\square}
\newcommand{\stepsoft}{\lozenge}

\usepackage{tikz}
\usetikzlibrary{backgrounds,positioning,trees,shapes,arrows,patterns,topaths}
\tikzstyle{call} = [->, line width=1mm, lightgray]
\tikzstyle{callsequencelabel} = [black, font=\footnotesize]
\tikzstyle{indist} = [-, line width=.5mm, darkgray, bend left, dotted]
\tikzset{>=latex}

	\newcommand{\scaleworldexample}{0.5}

	\newcommand{\wordthreeagents}[2]
	{\begin{tikzpicture}[scale=\scaleworldexample]
		\node (a) at (0,0) {$a$};
		\node (b) at (2,0) {$b$};
		\node (c) at (4,0) {$c$};
		\path[#1] (a) edge (b);
		\path[#2] (b) edge (c);
		\end{tikzpicture}}

	\newcommand{\wordthreeagentswithABwithac}[1]
	{\begin{tikzpicture}[scale=\scaleworldexample]
		\node (a) at (0,0) {$a$};
		\node (b) at (2,0) {$b$};
		\node (c) at (4,0) {$c$};
		\path[#1] (b) edge (c);
		\path[<->] (a) edge (b);
		\path[dashed, ->, bend left] (a) edge (c);
		\end{tikzpicture}}

	\newcommand{\wordthreeagentswithABwithAC}[1]
	{\begin{tikzpicture}[scale=\scaleworldexample]
		\node (a) at (0,0) {$a$};
		\node (b) at (2,0) {$b$};
		\node (c) at (4,0) {$c$};
		\path[->, dashed] (b) edge (c);
		\path[->, bend left] (c) edge (b);
		\path[<->] (a) edge (b);
		\path[<->, bend left] (a) edge (c);
		\end{tikzpicture}}

	\newcommand{\wordthreeagentswithABwithBC}[1]
	{\begin{tikzpicture}[scale=\scaleworldexample]
		\node (a) at (0,0) {$a$};
		\node (b) at (2,0) {$b$};
		\node (c) at (4,0) {$c$};
		\path[<->] (c) edge (b);
		\path[<->] (a) edge (b);
		\path[<-, bend right] (a) edge (c);
		\path[->, bend left, dashed] (a) edge (c);
		\end{tikzpicture}}

		\newcommand{\wordthreeagentswithABwithACbis}[1]
	{\begin{tikzpicture}[scale=\scaleworldexample]
		\node (a) at (0,0) {$a$};
		\node (b) at (2,0) {$b$};
		\node (c) at (4,0) {$c$};
		\path[<->] (b) edge (c);
		\path[->, bend left] (a) edge (c);
		\path[<->] (a) edge (b);
		\end{tikzpicture}}

	\newcommand{\wordthreeagentsTOTAL}
	{\begin{tikzpicture}[scale=\scaleworldexample]
		\node (a) at (0,0) {$a$};
		\node (b) at (2,0) {$b$};
		\node (c) at (4,0) {$c$};
		\path[<->] (b) edge (c);
		\path[<->, bend left] (a) edge (c);
		\path[<->] (a) edge (b);
		\end{tikzpicture}}

	\newcommand{\callexample}[4]{
	\draw[call, bend right=#4] (#1) edge node[callsequencelabel, inner sep=0.7mm]
			{$#3$} (#2);}

	\newcommand{\indist}[4]{
	\draw[indist, bend left=#4] (#1) edge node[below, callsequencelabel, inner sep=0.7mm] {$#3$} (#2);}


\addauthor[hans.van-ditmarsch@loria.fr]{Hans van Ditmarsch}{CNRS, LORIA, University of Lorraine, France \& ReLaX, Chennai, India}
\addauthor[malvin@w4eg.eu]{\href{https://orcid.org/0000-0002-2498-5073}{Malvin Gattinger}}{University of Groningen, The Netherlands}
\addauthor[louwe.kuijer@liverpool.ac.uk]{Louwe B.\ Kuijer}{University of Liverpool, United Kingdom}
\addauthor[pere.pardo.v@gmail.com]{\href{https://orcid.org/0000-0003-0181-4658}{Pere Pardo}}{Ruhr-Universit\"{a}t Bochum, Germany}
\title{Strengthening Gossip Protocols using Protocol-Dependent Knowledge}
\titlerunning{Strengthening Gossip Protocols}
\authorrunning{van Ditmarsch et.~al.}
\titlethanks{This work is based on chapter 6 entitled ``Dynamic Gossip''
of Malvin Gattinger's PhD thesis \cite{GattingerThesis2018}.
Malvin Gattinger is the corresponding author, and was affiliated to the
University of Amsterdam during part of this work.
We would like to thank the anonymous IfCoLog referees for their
helpful feedback and suggestions.}

\jreceived{20 April 2018}
\jvolume{6}
\jnumber{1}
\jyear{2019}
\setcounter{page}{157}

\begin{document}
\maketitle

\begin{abstract}
Distributed dynamic gossip is a generalization of the classic telephone problem
in which agents communicate to share secrets, with the additional twist that
also telephone numbers are exchanged to determine who can call whom.
Recent work focused on the success conditions of simple protocols such as
``Learn New Secrets'' ($\LNS$) wherein an agent $a$ may only call another
agent $b$ if $a$ does not know $b$'s secret.
A protocol execution is successful if all agents get to know all secrets.
On partial networks these protocols sometimes fail because they ignore
information available to the agents that would allow for better coordination.
We study how epistemic protocols for dynamic gossip can be strengthened,
using epistemic logic as a simple protocol language with a new operator
for protocol-dependent knowledge.
We provide definitions of different strengthenings and show that they perform
better than $\LNS$, but we also prove that there is no strengthening of $\LNS$
that always terminates successfully.
Together, this gives us a better picture of when and how epistemic coordination
can help in the dynamic gossip problem in particular and distributed systems
in general.
\end{abstract}

\section{Introduction}

The so-called \emph{gossip problem} is a problem about peer-to-peer information
sharing: a number of agents each start with some private information, and the
goal is to share this information among all agents, using only peer-to-peer
communication channels~\cite{Tijdeman1971:TelephoneProblem}.
For example, the agents could be autonomous sensors that need to pool their
individual measurements in order to obtain a joint observation.
Or the agents could be distributed copies of a database that can each be edited
separately, and that need to synchronize with each
other~\cite{EGKM2004:Epidemic,HPPRS2016:GossipDiscovery,Irv2016:GosSec}.

The example that is typically used in the literature, however, is a bit more
frivolous: as the name suggests, the gossip problem is usually represented as a
number of people \emph{gossiping}~\cite{Hedetniemi1988:GossipSurvey,DEPRS2015:DynamicGossip,DEPRS2017:EpistemicGossip}.
This term goes back to the oldest sources on the topic, such as~\cite{BakSho1972:GossipPhone}.
The gossip scenario gives us not only the name of the gossip problem, but also
the names of some of the other concepts that are used: the private information
that an agent starts out with is called that agent's \emph{secret}, the
communication between two agents is called a \emph{telephone call} and an agent $a$
is capable of contacting another agent $b$ if $a$ \emph{knows $b$'s telephone
number}.

These terms should not be taken too literally. Results on the
gossip problem can, in theory, be used by people that literally just want to
exchange gossip by telephone. But we model information exchange in general and
ignore all other social and fun aspects of gossip among humans --- although
these aspects can also be modeled in epistemic logic~\cite{Klein2017:LogDynGossipDEL}.

For our framework, applications where artificial agents need to synchronize
their information are much more likely. For example, recent ideas to improve
cryptocurrencies like bitcoin and other blockchain applications focus on the
peer-to-peer exchange (gossip) happening in such networks~\cite{SoLeZo2016:Spectre}
or even aim to replace blockchains with directed graphs storing the history of
communication~\cite{Baird2017:Hashgraph}.
Epistemic logic can shed new light on the knowledge of agents participating in blockchain
protocols~\cite{HalpernPass2017:KnowBlockchain,BruFluStu2017:LogicBlockchain}.

There are many different sets of rules for the gossip problem~\cite{Hedetniemi1988:GossipSurvey}.
For example, calls may be one-on-one, or may be conference calls. Multiple calls
may take place in parallel, or must happen sequentially. Agents may only be
allowed to exchange one secret per call, or exchange everything they know.
Information may go both ways during a call, or only in one direction.
We consider only the most commonly studied set of rules: calls are one-on-one,
calls are sequential, and the callers exchange all the secrets they know. So if
a call between $a$ and $b$ is followed by a call between $b$ and $c$, then in
the second call agent $b$ will also tell agent $c$ the secret of agent $a$.

The goal of gossip is that every agent knows every secret.
An agent who knows all secrets is called an \emph{expert}, so the goal is to
turn all agents into experts.

The \emph{classical} gossip problem, studied in the 1970s, assumed a total
communication network (anyone could call anyone else from the start), and focused
on optimal call sequences, i.e.\ schedules of calls which spread all the secrets
with a minimum number of calls, which happens to be $2n-4$ for $n \geq 4$
agents~\cite{Tijdeman1971:TelephoneProblem,Hurkens2000:GossipEfficiently}.
Later, this strong assumption on the network of the gossiping agents was dropped,
giving rise to studies on different network topologies (see~\cite{Hedetniemi1988:GossipSurvey}
for a survey), with $2n-3$ calls sufficing for most networks.

Unfortunately, these results about optimal call sequences only show that such
call sequences exist. They do not provide any guidance to the agents about how
to achieve an optimal call sequence. Effectively, these solutions assume a
central scheduler with knowledge of the entire network, who will come up with an
optimal schedule of calls, to be sent to the agents, who will eventually execute
it in the correct order.
Most results also rely upon synchrony so that agents can execute their calls at
the appropriate time (i.e.\ after some calls have been made, and before some
other calls are made).

The requirement that there be a central scheduler that tells the agents exactly
what to do, is against the spirit of the peer-to-peer communication that we want
to achieve. Computer science has shifted towards the study of \emph{distributed
algorithms} for the gossip problem~\cite{BLL1999:DiscovDistrib,KSSV2000:RandomRumor}.
Indeed, the gossip problem becomes more natural without a central scheduler;
the gossiping agents try to do their best with the information they have when
deciding whom to call.
Unfortunately, this can lead to sequences of calls that are redundant because
they contain many calls that are uninformative in the sense that neither agent
learns a new secret.
Additionally, the algorithm may fail, i.e., it may deadlock, get stuck in a
loop or terminate before all information has been exchanged.

For many applications it is not realistic to assume that every agent
is capable of contacting every other agent. So we assume that every agent has a
set of agents of which they ``know the telephone number'', their neighbors, so
to say, and that they are therefore able to contact.
We represent this as a directed graph, with an edge from agent $a$ to agent $b$
if $a$ is capable of calling $b$.

In classical studies, this graph is typically considered to be unchanging. In
more recent work on \emph{dynamic gossip} the agents exchange both the secrets
and the numbers of their contacts, therefore increasing the connectivity of the
network~\cite{DEPRS2015:DynamicGossip}. We focus on dynamic gossip. In
distributed protocols for dynamic gossip all agents decide on their own whom to
call, depending on their current information~\cite{DEPRS2015:DynamicGossip}, or
also depending on the expectation for knowledge growth resulting from the
call~\cite{DEPRS2017:EpistemicGossip}. The latter requires agents to represent
each other's knowledge, and thus epistemic logic.

Different protocols for dynamic gossip are successful in different classes of
gossip networks. The main challenge in designing such a protocol is to find a
good level of redundancy: we do not want superfluous calls, but the less
redundant a gossip protocol, the easier it fails in particular networks. Another
challenge is to keep the protocol simple. After all, a protocol that requires
the agents to solve a computationally hard problem every time they have to
decide whom to call next, would not be practical.
There is also a trade-off between the content of the message of which a call
consists, and the expected duration of gossip protocols. A nice example of that
is~\cite{HerMaf2017:ShareGossiping}, wherein the minimum number of calls to
achieve the epistemic goal is reduced from quadratic to linear order, however at
the price of more `expensive' messages, not only exchanging secrets but also
knowledge about secrets.

\bigskip

A well-studied protocol is ``Learn New Secrets'' ($\LNS$), in which agents are
allowed to call someone if and only if they do not know the other's secret.
This protocol excludes redundant calls in which neither participant learns
any new secrets.
As a result of this property, all $\LNS$ call sequences are finite. For small
numbers of agents, it therefore has a shorter expected execution length than
the ``Any Call'' ($\ANY$) protocol that allows arbitrary calls at all times
and thus allows infinite call sequences~\cite{DitKokSto2017:GossipReachExpec}.
Additionally, it is easy for agents to check whom they are allowed to call
when following $\LNS$.
However, $\LNS$ is not always successful. On some graphs it can terminate
unsuccessfully, i.e.\ when some agents do not yet know all secrets.
In particular there are graphs where the outcome depends on how the
agents choose among allowed calls~\cite{DEPRS2015:DynamicGossip}.

\bigskip

Fortunately, it turns out that failure of $\LNS$ can often be avoided with some
forethought by the calling agents. That is, if some of the choices available to
the agents lead to success and other choices to failure, it is often possible
for the agents to determine in advance which choices are the successful ones.
This leads to the idea of \emph{strengthening} a protocol. Suppose that $P$
is a protocol that, depending on the choices of the agents, is sometimes
successful and sometimes unsuccessful. A strengthening of $P$ is an addition
to $P$ that gives the agents guidance on how to choose among the options that
$P$ gives them.

The idea is that such a strengthening can leave good properties of a protocol
intact, while reducing the chance of failure. For example, any strengthening of
$\LNS$ will inherit the property that there are no redundant calls: It will
still be the case that agents only call other agents if they do not know their
secrets.

\bigskip

Let us illustrate this with a small example, also featuring as a running
example in the technical sections (see Figure~\ref{figure:executionThree} on
page~\pageref{figure:executionThree}).
There are three agents $a,b,c$. Agent $a$ knows the number of $b$, and $b$ and $c$ know each other's number.
Calling agents exchange secrets and numbers, which may expand the network, and they apply the $\LNS$ protocol, wherein you may only call other agents if you do not know their secret.
If $a$ calls $b$, it learns the secret of $b$ and the number of $c$.
All different ways to make further calls now result in all three agents knowing
all secrets. If the first call is between $b$ and $c$ (and there are no other
first calls than $ab$, $bc$, and $cb$), they learn each other's secret but no
new number. The only possible next call now is $ab$, after which $a$ and $b$
know all secrets but not $c$. But although $a$ now knows $c$'s number, she is
not permitted to call $c$, as she already learned $c$'s secret by calling $b$.
We are stuck. So, some executions of $\LNS$ on this graph are successful and
others are unsuccessful.
Suppose we now strengthen the $\LNS$ protocol into $\LNS'$ such that $b$
and $c$ have to wait before making a call until they are called by another agent.
This means that $b$ will first receive a call from $a$.
Then all executions of $\LNS'$ are successful on this graph.
In fact, there is only \emph{one} remaining execution: $ab;bc;ac$.
The protocol $\LNS'$ is a \emph{strengthening} of the protocol $\LNS$.

\bigskip

The main contributions of this paper are as follows.
We define what it means that a gossip protocol is common knowledge between all agents.
To that end we propose a logical semantics with an individual knowledge modality for protocol-dependent knowledge.
We then define various strengthenings of gossip protocols, both in the logical syntax and in the semantics.
This includes a strengthening called uniform backward induction, a form of backward induction applied to (imperfect information) gossip protocol execution trees.
We give some general results for strengthenings, but mainly apply our strengthenings to the protocol $\LNS$: we investigate some basic gossip graphs (networks) on which we gradually strengthen $\LNS$ until all its executions are successful on that graph.
However, no such strengthening will work for all gossip graphs. This is proved
by a counterexample consisting of a six-agent gossip graph, that requires fairly
detailed analysis. Some of our results involve the calculation and checking of
large numbers of call sequences. For this we use an implementation in Haskell.

\bigskip

Our paper is structured as follows.
In Section~\ref{sec:ELfDGP} we introduce the basic definitions to describe
gossip graphs and a variant of epistemic logic to be interpreted on them.
In particular, Subsection~\ref{subsec:protodep}
introduces a new operator for protocol-dependent knowledge.
In Section~\ref{sec:strengthening} we define semantic and --- using the new
operator --- syntactic ways to strengthen gossip protocols. We investigate how
successful those strengthenings are and study their behavior under iteration.
Section~\ref{sec:imposs} contains our main result, that strengthening $\LNS$ to
a strongly successful protocol is impossible.
In Section~\ref{sec:generalizations} we wrap up and conclude.
The Appendix describes the Haskell code used to support our results.

\section{Epistemic Logic for Dynamic Gossip Protocols}\label{sec:ELfDGP}

\subsection{Gossip Graphs and Calls}

\emph{Gossip graphs} are used to keep track of who knows which secrets
and which telephone numbers.

\begin{definition}[Gossip Graph]\label{def:ggs}
Given a finite set of agents $A$, a \emph{gossip graph} $G$ is a triple
$(A,N,S)$ where $N$ and $S$ are binary relations on $A$ such that
$I \subseteq S \subseteq N$ where $I$ is the identity relation on $A$.
An \emph{initial gossip graph} is a gossip graph where $S = I$.
We write $N_ab$ for $(a,b) \in N$ and $N_a$ for $\{ b \in A \mid N_ab \}$, and similarly for the relation $S$.
The set of all initial gossip graphs is denoted by $\gographs$.
\end{definition}

The relations model the basic knowledge of the agents.
Agent $a$ \emph{knows the number} of $b$ iff $N_a b$
and $a$ \emph{knows the secret} of $b$ iff $S_a b$.
If we have $N_a b$ and not $S_a b$ we also say that
  $a$ knows the \emph{pure number} of $b$.

\begin{definition}[Possible Call; Call Execution]\label{def:calls}
A \emph{call} is an ordered pair of agents $(a,b) \in A \times A$.
We usually write $ab$ instead of $(a,b)$.
Given a gossip graph $G$, a call $ab$ is \emph{possible} iff $N_a b$.
Given a possible call $ab$, $G^{ab}$ is the graph $(A',N',S')$ such that
  $A':=A$,
  $N'_a := N'_b := N_a \union N_b$,
  $S'_a := S'_b := S_a \union S_b$, and
  $N'_c := N_c$, $S'_c := S_c$ for $c \neq a,b$.
For a sequence of calls $ab;cd;\dots$ we write $\sigma$ or $\tau$.
The empty sequence is $\epsilon$.
A sequence of possible calls is a \emph{possible call sequence}.
We extend the notation $G^{ab}$ to possible call sequences by $G^\epsilon := G$ and $G^{\sigma;ab} := {(G^\sigma)}^{ab}$.
Gossip graph $G^\sigma$ is the result of \emph{executing} $\sigma$ in $G$.
\end{definition}

To visualize gossip graphs we draw $N$ with dashed and $S$ with solid arrows.
When making calls, the property $S \subseteq N$ is preserved,
so we omit the dashed $N$ arrow if there already is a solid $S$ arrow.

\begin{example}\label{example:simpleIntro}
Consider the following initial gossip graph $G$ in which $a$ knows the number
of $b$, and $b$ and $c$ know each other's number and no other numbers are known:
\begin{center}
  \begin{tikzpicture}[node distance=1.5cm,>=latex]
    \node (a) [] {$a$};
    \node (b) [right of=a] {$b$};
    \node (c) [right of=b] {$c$};
    \draw (a) [dashed,->] -- (b);
    \draw (b) [dashed,<->] -- (c);
  \end{tikzpicture}
\end{center}
Suppose that $a$ calls $b$. We obtain the gossip graph $G^{ab}$ in which
$a$ and $b$ know each other's secret and $a$ now also knows the number of $c$:
\begin{center}
  \begin{tikzpicture}[node distance=1.5cm,>=latex]
    \node (a) [] {$a$};
    \node (b) [right of=a] {$b$};
    \node (c) [right of=b] {$c$};
    \draw (a) [<->] -- (b);
    \draw (b) [dashed,<->] -- (c);
    \draw (b) [dashed,<->] -- (c);
    \draw[->,dashed] (a) to[bend left] (c);
  \end{tikzpicture}
\end{center}
\end{example}

\subsection{Logical Language and Protocols}
\label{subsec:language}

We now introduce a logical language which we will interpret on gossip graphs.
Propositional variables $N_ab$ and $S_ab$ stand for ``agent $a$ knows the number of agent $b$'' and ``agent $a$ knows the secret of agent $b$'', and $\top$ is the `always true' proposition.
Definitions~\ref{def:language} and~\ref{def:protocol} are by simultaneous induction, as the language construct $K_a^P \phi$ refers to a protocol $P$.

\begin{definition}[Language]\label{def:language}
We consider the language $\Lng$ defined by
  \[ \begin{array}{lll}
    \phi & ::= & \top \mid N_ab \mid S_ab \mid \neg \phi \mid (\phi \wedge \phi) \mid K_a^P \phi \mid [\pi] \phi \\[0.5em]
    \pi  & ::= & ?\phi \mid ab \mid (\pi \then \pi) \mid (\pi \union \pi) \mid \pi^\ast
  \end{array} \]
where $a,b \in A$.
Members of $\Lng$ of type $\phi$ are \emph{formulas} and those of type $\pi$ are \emph{programs}.
\end{definition}

\begin{definition}[Syntactic protocol]\label{def:protocol}
A \emph{syntactic protocol}
$P$ is a program defined by
  \[ P :=
    {\left(\Union_{a \neq b \in A} \left(? (N_ab \land P_{ab}) ; ab \right)\right)}^\ast ;
    ?\Et_{a \neq b \in A} \neg \left( N_ab \land P_{ab} \right)
  \] where for all $a \neq b \in A$, $P_{ab}\in\Lng$ is a formula.
  This formula is called the \emph{protocol condition} for call $ab$ of protocol $P$.
  The notation $P_{ab}$ means that $a$ and $b$ are designated variables in that formula.
\end{definition}
Other logical connectives and program constructs are defined by abbreviation.
Moreover, $N_a bcd$ stands for $N_ab \et N_ac \et N_ad$, and $N_a B$ for $\Et_{b \in B}N_ab$.
We use analogous abbreviations for the relation $S$.
We write $\Exp_a$ for $S_a A$.
We then say that agent $a$ is an \emph{expert}.
Similarly, we write $\Exp_B$ for $\Et_{b \in B} \Exp_b$, and $\Exp$ for $\Exp_A$: all agents are experts.

Construct $[\pi] \phi$ reads as ``after every execution of program $\pi$, $\phi$ (is true).''
For program modalities, we use the standard definition for diamonds:
$\langle\pi\rangle\phi := \lnot[\pi]\lnot\phi$, and further: $\pi^0 := {?\top}$ and for all $n\in\mathbb{N}$, $\pi^n := \pi^{n-1};\pi$.

Our protocols are \emph{gossip} protocols, but as we define no other, we omit the word `gossip'.
The word `syntactic' in syntactic protocol is to distinguish it from the semantic protocol that will be defined later.
It is also often omitted.

Our new operator $K_a^P \phi$ reads as ``given the protocol $P$, agent $a$ knows that $\phi$''.
Informally, this means that agent $a$ knows that $\phi$ on the assumption that it is common knowledge among the agents that they all use the gossip protocol $P$.
The epistemic dual is defined as $\hat{K}_a^P \phi := \lnot K_a^P \lnot \phi$ and can be read as ``given the protocol $P$, agent $a$ considers it possible that $\phi$.''

We note that the language is well-defined, in particular $K_a^P$.
The only variable parts of a protocol $P$ are the protocol conditions $P_{ab}$.
Hence, given $|A|$ agents, and the requirement that $a \neq b$, a protocol is determined by its $|A| \cdot (|A|-1)$ many protocol conditions.
We can therefore see the construct $K_a^P \phi$ as an operator with input $(|A| \cdot (|A|-1))+1$ objects of type formula (namely all these protocol condition formulas plus the formula $\phi$ in $K_a^P \phi$), and as output a more complex object of type formula (namely $K_a^P \phi$).%
\footnote{Alternatively one could define a \emph{protocol condition function} $f \colon A^2 \rightarrow \Lng$ and proceed as follows.
In the language BNF replace $K_a^P \phi$ by $K_a (\vec{\phi_{ab}}, \phi)$ where $a \neq b$ and $\vec{\phi_{ab}}$ is a vector representing $|A| \cdot (|A|-1)$ arguments, and in the definition of protocol replace $P_{ab}$ by $f(a,b)$.
That way, Definition~\ref{def:language} precedes Definition~\ref{def:protocol} and is no longer simultaneously defined.
Then, when later defining the semantics of $K_a (\vec{\phi_{ab}}, \phi)$, replace all $\phi_{ab}$ by $f(a,b)$.}

Note that this means that all knowledge operators in a call condition $P_{ab}$ of a protocol $P$ must be relative to protocols strictly simpler than $P$.
In particular, the call condition $P_{ab}$ cannot contain the operator $K_a^P$, although it may contain $K_a^{P'}$ where $P'$ is less complex than $P$.
So the language is incapable of describing the ``protocol'' $X$ given by ``$a$ is allowed to call $b$ if and only if $a$ knows, assuming that $X$ is common knowledge, that $b$ does not know $a$'s secret.''
This is intentional; the ``protocol'' $X$ is viciously circular so we do not want our language to be able to represent it.

\begin{example}\label{example:LNS}
The ``Learn New Secrets'' protocol ($\LNS$) is the protocol with protocol conditions $\lnot S_ab$ for all $a \neq b \in A$.
This prescribes that you are allowed to call any agent whose secret you do not yet know (and whose number you already know).
The ``Any Call'' protocol ($\ANY$) is the protocol with protocol conditions $\top$ for all $a \neq b \in A$.
You are allowed to call any agent whose number you know.
\end{example}

\noindent
The standard epistemic modality is defined by abbreviation as $K_a \phi := K^\ANY_a \phi$.

\subsection{Semantics of Protocol-Dependent Knowledge}\label{subsec:protodep}

We now define how to interpret the language $\Lng$ on gossip graphs.
A \emph{gossip state} is a pair $(G,\sigma)$ such that $G$ is an initial gossip graph and $\sigma$ a call sequence possible on $G$ (see Def.~\ref{def:calls}).
We recall that $G$ and $\sigma$ induce the gossip graph $G^\sigma = (A,N^\sigma,S^\sigma)$.
This is called the gossip graph \emph{associated} with gossip state $(G,\sigma)$.
The semantics of $\Lng$ is with respect to a given initial gossip graph $G$, and defined on the set of gossip states $(G,\sigma)$ for all $\sigma$ possible on $G$.
Definitions~\ref{def:SyncEpistRel} and~\ref{def:Semantics} are simultaneously defined.

\begin{definition}[Epistemic Relation]\label{def:SyncEpistRel}
Let an initial gossip graph $G = (A,N,S)$ and a protocol $P$ be given.
We inductively define the \emph{epistemic relation} $\sim_a^P$ for agent $a$ over gossip states $(G,\sigma)$, where $G^\sigma = (A,N^\sigma,S^\sigma)$ are the associated gossip graphs.
\begin{enumerate}
  \item $(G,\epsilon) \sim_a^P (G,\epsilon)$;
  \item if $(G,\sigma) \sim_a^P (G,\tau)$, $N^\sigma_b = N^\tau_b$, $S^\sigma_b = S^\tau_b$, and $ab$ is $P$-permitted at $(G,\sigma)$ and at $(G,\tau)$, then $(G,\sigma;ab) \sim_a^P (G,\tau;ab)$; \\
    if $(G,\sigma) \sim_a^P (G,\tau)$, $N^\sigma_b = N^\tau_b$, $S^\sigma_b = S^\tau_b$, and $ba$ is $P$-permitted at $(G,\sigma)$ and at $(G,\tau)$, then $(G,\sigma;ba) \sim_a^P (G,\tau;ba)$;
  \item if $(G,\sigma) \sim_a^P (G,\tau)$ and $c,d,e,f \neq a$ such that $cd$ is $P$-permitted at $(G,\sigma)$ and $ef$  is $P$-permitted at $(G,\tau)$,  then $(G,\sigma;cd) \sim_a^P (G,\tau;ef)$.
\end{enumerate}
\end{definition}

\begin{definition}[Semantics]\label{def:Semantics}
Let initial gossip graph $G = (A,N,S)$ be given.
We inductively define the interpretation of a formula $\phi \in \Lng$ on a gossip
state $(G,\sigma)$, where $G^\sigma = (A,N^\sigma,S^\sigma)$ is the associated gossip graph.
\[
\begin{array}{lll}
G,\sigma \models \top          &            & \text{always} \\
G,\sigma \models N_ab          & \text{iff} & N^\sigma_a b \\
G,\sigma \models S_ab          & \text{iff} & S^\sigma_a b \\
G,\sigma \models \neg \phi     & \text{iff} & G,\sigma \not \models \phi \\
G,\sigma \models \phi\land\psi & \text{iff} & G,\sigma \models \phi \text{ and } G,\sigma \models \psi \\
G,\sigma \models K_a^P \phi    & \text{iff} & G,\sigma' \models \phi \text{ for all } (G,\sigma') \sim_a^P (G,\sigma) \\
G,\sigma \models [\pi] \phi    & \text{iff} & G,\sigma' \models \phi \text{ for all } (G,\sigma') \in \llbracket \pi \rrbracket (G,\sigma) \\
\end{array}
\]
where $\llbracket \cdot \rrbracket$ is the following interpretation of programs as relations between gossip states.
Note that we write $\llbracket \pi \rrbracket(G,\sigma)$ for the set $\{(G,\sigma') \mid ((G,\sigma), (G, \sigma')) \in \llbracket \pi \rrbracket \}$.
\[
\begin{array}{rcl}
\llbracket ?\phi \rrbracket {(G,\sigma)}    & := & \{ (G,\sigma) \mid G,\sigma \models \phi \} \\
\llbracket ab \rrbracket {(G,\sigma)}       & := & \{ (G,(\sigma;ab)) \mid G,\sigma \models N_ab \} \\
\llbracket \pi;\pi' \rrbracket {(G,\sigma)} & := & \bigcup \{ \llbracket \pi' \rrbracket {(G,\sigma')} \mid (G,\sigma') \in \llbracket \pi \rrbracket {(G,\sigma)} \} \\
\llbracket \pi\cup\pi' \rrbracket {(G,\sigma)} & := & \llbracket \pi \rrbracket {(G,\sigma)} \cup \llbracket \pi' \rrbracket {(G,\sigma)}\\
\llbracket \pi^\ast \rrbracket {(G,\sigma)}    & := & \bigcup \{ \llbracket \pi^n \rrbracket {(G,\sigma)} \mid n \in \mathbb{N} \} \\
\end{array}
\]
If $G,\sigma \models P_{ab}$ we say that $ab$ is \emph{$P$-permitted} at $(G,\sigma)$.
A \emph{$P$-permitted call sequence} consists of $P$-permitted calls.
\end{definition}

Let us first explain why the interpretation of protocol-dependent knowledge is well-defined.
The interpretation of $K^P_a \phi$ in state $(G,\sigma)$ is a function of the truth of $\phi$ in all $(G,\tau)$ accessible via $\sim_a^P$.
This is standard.
Non-standard is that the relation $\sim_a^P$ is a function of the truth of protocol conditions $P_{ab}$ in gossip states including $(G,\sigma$).
This may seem a slippery slope.
However, note that $K^P_a \phi$ cannot be a subformula of any such $P_{ab}$, as the language $\Lng$ is well-defined: knowledge cannot be self-referential.
These checks of $P_{ab}$ can therefore be performed without vicious circularity.

Let us now explain an important property of $\sim_a^P$, namely that it only relates two gossip states if both are reachable by the protocol $P$.
So if $(G,\sigma)\sim_a^P(G,\sigma')$ and $\sigma$ is a $P$-permitted call sequence, then $\sigma'$ is $P$-permitted as well.
In other words, $a$ assumes that no one will make any calls that are not $P$-permitted.
The set $\{\sim_a^P\mid a\in A\}$ of relations therefore represents the information state of the agents under the assumption that it is common knowledge that the protocol $P$ will be followed.

Given the logical semantics, a convenient primitive is the following \emph{gossip model}.
\begin{definition}[Gossip Model; Execution Tree]\label{def:ModelStatTree}
Given an initial gossip graph $G$, the \emph{gossip model} for $G$ consists of
all \emph{gossip states} $(G,\sigma)$ (where, by definition of gossip states, $\sigma$ is  possible on $G$), with epistemic relations $\sim^P_a$ between gossip states.
The \emph{execution tree} of a protocol $P$ given $G$ is the submodel of the
gossip model restricted to the set of those $(G,\sigma)$ where $\sigma$ is
$P$-permitted.
\end{definition}

The relation $\sim_a^P$ is an equivalence relation on the restriction of a gossip model to the set of gossip states $(G,\sigma)$ where $\sigma$ is $P$-permitted.
This is why we use the symbol $\sim$ for the relation.
However, $\sim_a^P$ is typically not an equivalence relation on the entire domain of the gossip model, as $\sim_a^P$ is not reflexive on unreachable gossip states $(G,\sigma)$.

In our semantics, the modality $[ab]$ can always be evaluated.
There are three cases to distinguish.
$(i)$ If the call $ab$ is not possible (if $a$ does not know the number of $b$), then $\llbracket ab \rrbracket (G,\sigma) = \emptyset$, so that $[ab]\phi$ is trivially true for all $\phi$.
$(ii)$ If the call $ab$ is possible but not $P$-permitted, then $\llbracket ab \rrbracket(G,\sigma) = \{(G,\sigma ; ab)\}$ but $\sim_a^P(G,\sigma;ab) = \emptyset$, so that in such states $K^P_a \bot$ is true: the agent believes everything including contradictions. In other words, we have that $\lnot P_{ab} \lthen [ab] K_c^P \bot$.
$(iii)$ If the call $ab$ is possible and $P$-permitted, then $\llbracket ab \rrbracket(G,\sigma) = \{(G,\sigma ; ab)\}$ and $\sim_a^P(G,\sigma;ab) \neq \emptyset$ consists of the equivalence class of gossip states that are indistinguishable for agent $a$ after call $ab$.

In view of the above, one might want to have a modality or program strictly standing for `call $ab$ is possible and $P$-permitted'. We can enforce protocol $P$ for call $ab$ by $[?P_{ab};ab]\phi$, for ``after the $P$-permitted call $ab$, $\phi$ is true.''

Let us now be exact in what sense the gossip model is a Kripke model. Clear enough, the set of gossip states $(G,\sigma)$ constitute a \emph{domain}, and we can identify the valuation of atomic propositions $N_ab$ (resp.\ $S_ab)$ with the subset of the domain such that $(G,\sigma) \models N_ab$ (resp.\ $(G,\sigma) \models S_ab$).
The relation to the usual accessibility relations of a Kripke model is less clear.
For each agent $a$, we do not have a unique relation $\sim_a$, but parametrized relations $\sim^P_a$; therefore, in a way, there are as many relations for agent $a$ as there are protocols $P$. These relations $\sim^P_a$ are only implicitly given. Given $P$, they can be made explicit if a semantic check of $K^P_a \phi$ so requires.

Gossip models are reminiscent of the history-based models of~\cite{parikhetal:2003} and of the protocol-generated forest of~\cite{jfaketal.JPL:2009}.
A gossip model is a protocol-generated forest (and similarly, the execution trees contained in the gossip model are protocol-generated forests), although a rather small forest, namely consisting of a single tree.
An important consequence of this is that the agents initially have \emph{common knowledge of the gossip graph}.
For example, in the initial gossip graph of the introduction, depicted in  Figure~\ref{figure:executionThree}, agent $a$ knows that agent $c$ only knows the number of $b$.
Other works consider uncertainty about the initial gossip graph (for example, to represent that agent $a$ is uncertain whether $c$ knows $a$'s number), such that each gossip graph initially considered possible generates its own tree \cite{DEPRS2017:EpistemicGossip}.

The gossip states $(G,\sigma)$ that are the domain elements of the gossip model carry along a \emph{history} of prior calls.
This can, in principle, be used in a protocol language to be interpreted on such models, although we do not do this in this work.
An example of such a protocol is the ``Call Once'' protocol described in~\cite{DEPRS2015:DynamicGossip}: call $ab$ is permitted in gossip state $(G,\sigma)$, if $ab$ and $ba$ do not occur in $\sigma$.

With respect to the protocol $\ANY$ the gossip model is not restricted.
If we only were to consider the protocol $\ANY$, to each agent we can associate a unique epistemic relation $\sim^\ANY_a$ in the gossip model, for which we might as well write $\sim_a$. We now have a standard Kripke model. This justifies $K_a \phi$ as a suitable abbreviation of $K^\ANY_a \phi$.

\begin{definition}[Extension of a protocol]\label{def:extension}
For any initial gossip graph $G$ and any syntactic protocol $P$ we define the
\emph{extension of $P$ on $G$} by
\[ \begin{array}{lcl}
P_0(G) &:=& \{ \epsilon \} \\
P_{i+1}(G) &:=&
    \{ \sigma;ab \mid
        \sigma \in P_i(G),
      \ a,b \in A,
      \ G,\sigma \models P_{ab}
    \} \\
P(G) &:=& \bigcup_{i<\omega} P_k(G)
\end{array} \]
The \emph{extension of $P$} is $\{ (G,P(G)) \mid G \in \gographs \}$.
\end{definition}

Recall that $\gographs$ is the set of all initial gossip graphs. We often
identify a protocol with its extension. To compare protocols we will write
$P \subseteq P'$ iff for all $G \in \gographs$ we have $P(G) \subseteq P'(G)$.

\begin{definition}[Success]\label{def-success}
Given an initial gossip graph $G$ and protocol $P$, a $P$-permitted call sequence $\sigma$ is \emph{terminal} iff for all calls $ab$, $G,\sigma \not \models P_{ab}$.
We then also say that the gossip state $(G,\sigma)$ is \emph{terminal}.
A terminal call sequence is \emph{successful} iff after its execution all agents are experts.
Otherwise it is \emph{unsuccessful}.
\begin{itemize}
\item A protocol $P$ is \emph{strongly successful} on $G$ iff all terminal $P$-permitted call sequences are successful: $G, \epsilon \models [P]\Exp$.
\item A protocol is \emph{weakly successful} on $G$ iff some terminal $P$-permitted call sequences are successful: $G, \epsilon \models \dia{P}\Exp$.
\item A protocol is \emph{unsuccessful} on $G$ iff no terminal $P$-permitted call sequences are successful: $G, \epsilon \models [P]\lnot\Exp$.
\end{itemize}
A protocol is \emph{strongly successful} iff it is strongly successful on all initial gossip
graphs $G$, and similarly for weakly successful and unsuccessful.
\end{definition}

Instead of `is successful' we also say `\emph{succeeds}', and instead of `terminal sequence' we also say that the sequence is \emph{terminating}.
Given a gossip graph $G$ and a $P$-permitted sequence $\sigma$ we say that the associated gossip graph $G^\sigma$ is \emph{$P$-reachable} (from $G$).
A terminal $P$-permitted sequence is also called an \emph{execution} of $P$.
Given any set $X$ of call sequences, $\overline{X}$ is the subset of the terminal sequences of $X$.

All our protocols can always be executed.
If this is without making any calls, the protocol extension is empty.
Being empty does not mean that $[P]\bot$ holds, which is never the case.

Strong success implies weak success, but not vice versa.
Formally, we have that $[P]\phi \to \langle P \rangle \phi$ is valid for all
protocols $P$, but $\langle P \rangle \phi \to [P]\phi$ is not valid in general,
because our protocols are typically non-deterministic.

We can distinguish unsuccessful termination (not all agents know all secrets) from successful termination.
In other works~\cite{DEPRS2015:DynamicGossip,AptWoj2017:GossipCK} this distinction cannot be made.
In those works termination implies success.

\begin{example}\label{example:executionThree}
We continue with Example~\ref{example:simpleIntro}.
The execution tree of $\LNS$ on this graph is shown in Figure~\ref{figure:executionThree}.
We denote calls with gray arrows and the epistemic relation with dotted lines.
For example, agent $a$ cannot distinguish whether call $bc$ or $cb$ happened.
At the end of each branch the termination of $\LNS$ is denoted with $\checkmark$ if successful, and $\times$ if unsuccessful.

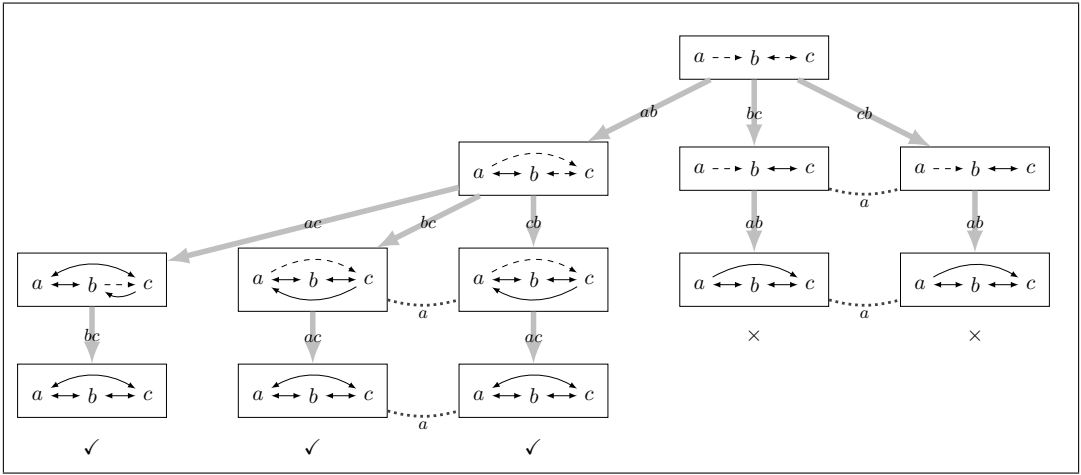
\begin{figure}[ht!]
  \centering
  \tikzstyle{world} = [draw]
  \scalebox{0.73}{\begin{tikzpicture}
    \node (blank1) at (0, .7) {};
    \node (blank2) at (5.5, 0) {};
    \node (blank3) at (0, -7.2) {};
    \node[world] (abbccb) at (0, 0) {\wordthreeagents {->, dashed} {<->, dashed}};
    \node[world] (ABbccb) at (-4, -2) {\wordthreeagentswithABwithac{<->, dashed}};
    \node[world] (BCCB) at (0, -2) {\wordthreeagents{->,dashed} {<->}};
    \node[world] (CBBC) at (4, -2) {\wordthreeagents{->,dashed} {<->}};
    \node[world] (ACABbccb) at (-12, -4) {\wordthreeagentswithABwithAC{<->, dashed}};
    \node[world] (ABBCCB) at (-8, -4) {\wordthreeagentswithABwithBC{<->, dashed}};
    \node[world] (ABCBBC) at (-4, -4) {\wordthreeagentswithABwithBC{<->, dashed}};
    \node[world] (BCCBAB) at (0, -4) {\wordthreeagentswithABwithACbis{->,dashed}};
    \node[world] (CBBCAB) at (4, -4) {\wordthreeagentswithABwithACbis{->,dashed}};
    \node[world] (total1) at (-12, -6) {\wordthreeagentsTOTAL};
    \node[world] (total2) at (-8, -6) {\wordthreeagentsTOTAL};
    \node[world] (total3) at (-4, -6) {\wordthreeagentsTOTAL};
    \callexample{abbccb}{ABbccb}{ab}{0}
    \callexample{abbccb}{BCCB}{bc}{0}
    \callexample{abbccb}{CBBC}{cb}{0}
    \callexample{ABbccb}{ACABbccb}{ac}{0}
    \callexample{ABbccb}{ABBCCB}{bc}{0}
    \callexample{ABbccb}{ABCBBC}{cb}{0}
    \callexample{BCCB}{BCCBAB}{ab}{0}
    \callexample{CBBC}{CBBCAB}{ab}{0}
    \callexample{ACABbccb}{total1}{bc}{0}
    \callexample{ABBCCB}{total2}{ac}{0}
    \callexample{ABCBBC}{total3}{ac}{0}
    \indist{ABBCCB}{ABCBBC}{a}{-15}
    \indist{total2}{total3}{a}{-15}
    \indist{BCCB}{CBBC}{a}{-15}
    \indist{BCCBAB}{CBBCAB}{a}{-15}
    \node (v1) at (-12, -7) {$\checkmark$};
    \node (v2) at (-8, -7) {$\checkmark$};
    \node (v3) at (-4, -7) {$\checkmark$};
    \node (x1) at (0, -5) {$\times$};
    \node (x2) at (4, -5) {$\times$};
  \end{tikzpicture}}
  \caption{Example of an execution tree for $\LNS$.}\label{figure:executionThree}
\end{figure}

To illustrate our semantics, for this graph $G$ we have:
\begin{itemize}
  \item $G,\epsilon \models N_a b \land \lnot S_a b$ ---
    the call $ab$ is $\LNS$-permitted at the start.
  \item $G,\epsilon \models [ab] (S_a b \land S_b a)$ ---
    after the call $ab$ the agents $a$ and $b$ know each other's secret
  \item $G,\epsilon \models [ab] \langle ac \rangle \top$ ---
    after the call $ab$ the call $ac$ is possible.
  \item $G,\epsilon \models [ab] [LNS] \Exp$ ---
    after the call $ab$ the $\LNS$ protocol will always terminate successfully.
  \item $G,\epsilon \models [bc \cup cb] [LNS] \lnot\Exp$ ---
    after the calls $bc$ or $cb$ the $\LNS$ protocol will always terminate unsuccessfully.
  \item $G,\epsilon \models [bc \cup cb] K_a^{LNS} (S_b c \land S_c b)$ ---
    after the calls $bc$ or $cb$, agent $a$ knows that $b$ and $c$ know each
    others secret.
  \item $G,ab;bc;ac \models \bigwedge_{i \in \{ a, b, c \}} K_i^{LNS} \Exp$ ---
    after the call sequence $ab;bc;ac$ everyone knows that everyone is an expert.
\end{itemize}
We only have epistemic edges for agent $a$, and those are between states with identical gossip graphs.
If there are three agents, then if you are not involved in a call, you know that the other two agents must have called.
You may only be uncertain about the direction of that call.
But the direction of the call does not matter for the numbers and secrets being exchanged.
Hence all agents always know what the current gossip graph is.
For a more interesting epistemic relation, see Figure~\ref{figure:nExampleTreePart} in the Appendix.
\end{example}

\subsection{Symmetric and epistemic protocols, and semantic protocols}

Given a protocol $P$, for any $a\neq b$ and $c\not = d$, the protocol conditions $P_{ab}$ and $P_{cd}$ can be different formulas. So a protocol may require different agents to obey different rules.
Although there are settings wherein this is interesting to investigate, we want to restrict our investigation to those protocols where there is one protocol condition to rule them all.
This is enforced by the requirement of \emph{symmetry}.
Another requirement is that the calling agent should know that the protocol condition is satisfied before making a call.
That is the requirement that the protocol be \emph{epistemic}.
It is indispensable in order to see our protocols as \emph{distributed} gossip protocols.

\begin{definition}[Symmetric and epistemic syntactic protocol]\label{def:SynSymEpis}
Let a syntactic protocol $P$ be given. Protocol $P$ is \emph{symmetric} iff
  for every permutation $J$ of agents, we have $\phi_{J(a)J(b)}=J(\phi_{ab})$,
  where $J(\phi_{ab})$ is the natural extension of $J$ to formulas.\footnote{Formally: $J(\top):=\top$, $J(N_ab) := N_ab$, $J(S_ab) := S_ab$, $J(\neg\phi) := \neg J(\phi)$, $J(\phi\et\psi) := J(\phi)\et J(\psi)$,  $J(K_a^P \psi) := K_{J(a)}^{J(P)} J(\psi)$, $J(?\phi) := {? J(\phi)}$, $J(ab) := J(a)J(b)$, $J(\pi;\pi') := J(\pi);J(\pi')$, $J(\pi \cup \pi') := J(\pi) \cup J(\pi')$, $J(\pi^*) := J(\pi)^*$.}
Protocol $P$ is \emph{epistemic} iff for every $a,b\in A$, the protocol condition $P_{ab} \rightarrow K^P_a P_{ab}$ is valid.
We henceforth require all our protocols to be symmetric and epistemic.
\end{definition}

Intuitively, a protocol is \emph{epistemic} if callers always know when to make a call, without being given instructions by a central scheduler.
This means that whenever $P_{ab}$ is true, so agent $a$ is allowed to call agent $b$, it must be the case that $a$ knows that $P_{ab}$ is true.
In other words, in an epistemic protocol $P_{ab}$ implies $K_a^PP_{ab}$.
Furthermore, by Definition~\ref{def:Semantics} knowledge is truthful on the execution tree for protocol $P$ in gossip model.
So except in the gossip states that cannot be reached using the protocol $P$, we also have that $K_a^PP_{ab}$ implies $P_{ab}$.

If a protocol is \emph{symmetric} the names of the agents are irrelevant and therefore interchangeable.
So a symmetric protocol is not allowed to ``hard-code'' agents to perform certain roles.
This means that, for example, we cannot tell agent $a$ to call $b$, as opposed to $c$, just because $b$ comes before $c$ in the alphabet.
But we can tell $a$ to call $b$, as opposed to $c$, on the basis that, say, $a$ knows that $b$ knows five secrets while $c$ only knows two secrets.
If a protocol $P$ is symmetric, we can think of the protocol condition  as the \emph{unique} protocol condition for $P$, modulo permutation.

Epistemic and symmetric protocols capture the distributed peer-to-peer nature
of the gossip problem.

\begin{example}
The protocols $\ANY$ and $\LNS$ are symmetric and epistemic.
For $\ANY$ this is trivial.
For $\LNS$, observe that agents always know which numbers and secrets they know.
A direct consequence of clause (2.) of Definition~\ref{def:SyncEpistRel} of the epistemic relation is that for any protocol $P$, if $(G,\sigma) \sim^P_a (G, \sigma')$, then $N^\sigma_a = N^{\sigma'}_a$ and $S^\sigma_a = S^{\sigma'}_a$.
Thus, applying the clause for knowledge $K^P_a \phi$ of Definition \ref{def:Semantics},
  we immediately get that the following formulas are all valid:
  $N_ab \lthen K_a^P N_ab$,
  $\neg N_ab \lthen K_a^P \neg N_ab$,
  $S_ab \lthen K_a^P S_ab$,
and $\neg S_ab \lthen K_a^P \neg S_ab$.
Therefore, in particular this holds for $P = \LNS$.
\end{example}

Although the numbers and secrets known by an agent before and after a call may vary, the agent always knows \emph{whether} she knows a given number or secret. Knowledge about other agents having a certain number or a secret is preserved after calls. But, of course, knowledge about other agents \emph{not} having a certain number or secret is not preserved after calls.

Not all protocols we discuss in this work are definable in the logical language.
We therefore need the additional notion of a \emph{semantic protocol}, defined by its extension.

\begin{definition}[Semantic protocol]\label{def:semanticprotocol}
A \emph{semantic protocol} is a function
  $P \colon \gographs \to \Pow({(A \times A)}^*)$
mapping initial gossip graphs to sets of call sequences.
We assume semantic protocols to be closed under subsequences, i.e.\ for all $G$
we want that $\sigma;ab \in P(G)$ implies $\sigma \in P(G)$.
For a \emph{semantic protocol} $P$ we say that a call $ab$ is
\emph{$P$-permitted at $(G,\sigma$)} iff $(\sigma;ab) \in P(G)$.
\end{definition}

Given any syntactic protocol we can view its extension as a semantic protocol.
Using this definition of permitted calls for semantic protocols we can apply
Definition~\ref{def:SyncEpistRel} to get the epistemic relation with respect
to a semantic protocol $P$. Because the relation $\sim_a^P$ depends only on
which calls are allowed, the epistemic relation with respect to a (syntactic)
protocol $P$ is identical to the epistemic relation with respect to the
extension of $P$.

We also require that semantic protocols are symmetric and epistemic, adapting
the definitions of these two properties as follows.

\begin{definition}[Symmetric and epistemic semantic protocol]\label{def:SemSymEpis}
A semantic protocol $P$ is \emph{symmetric} iff
  for all initial gossip graphs $G$ and for all permutations $J$ of agents we have $P(J(G)) = J(P(G))$ (where $J(P(G)) := \{ J(\sigma) \mid \sigma \in P(G) \}$).
A semantic protocol $P$ is \emph{epistemic} iff
  for all initial gossip graphs $G$ and for all $\sigma \in P(G)$ we have:
  $(\sigma;ab) \in P(G)$
  iff
  for all $\tau \sim_a^P \sigma$ we have $(\tau;ab) \in P(G)$.
\end{definition}

It is easy to verify that the syntactic definition of an epistemic protocol agrees with the semantic definition.

\begin{proposition}\label{prop:epistemic_agrees}
A syntactic protocol $P$ is epistemic if and only if its extension is epistemic.
\end{proposition}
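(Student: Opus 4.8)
The plan is to unfold both notions of ``epistemic'' into statements about the truth of the protocol conditions $P_{ab}$ across $\sim_a^P$-classes, and then to observe that they reduce to the same condition. The one bridge I rely on throughout is the remark following Definition~\ref{def:semanticprotocol}: the relation $\sim_a^P$ computed from the syntactic protocol coincides with the one computed from its extension, since $\sim_a^P$ depends only on which calls are permitted, and for the extension a call $ab$ is permitted at $(G,\sigma)$ exactly when $\sigma \in P(G)$ and $G,\sigma \models P_{ab}$.

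First I would rewrite the semantic condition. Fix $G$, agents $a \neq b$, and $\sigma \in P(G)$. Since $\sigma$ is already $P$-permitted, the construction in Definition~\ref{def:extension} gives $(\sigma;ab) \in P(G)$ iff $G,\sigma \models P_{ab}$. Moreover I would argue, by induction on Definition~\ref{def:SyncEpistRel}, that $\sim_a^P$ only ever relates $P$-permitted sequences: the base case is $\epsilon$, and clauses (2) and (3) only append calls that are $P$-permitted at both endpoints. Consequently every $\tau$ with $\tau \sim_a^P \sigma$ lies in $P(G)$, so $(\tau;ab) \in P(G)$ iff $G,\tau \models P_{ab}$. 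Reading off the clause for $K_a^P$ from Definition~\ref{def:Semantics}, the phrase ``for all $\tau \sim_a^P \sigma$ we have $(\tau;ab) \in P(G)$'' is exactly $G,\sigma \models K_a^P P_{ab}$. Hence the extension is epistemic iff for all $G$, all $a \neq b$, and all $\sigma \in P(G)$ we have $G,\sigma \models P_{ab}$ iff $G,\sigma \models K_a^P P_{ab}$.

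Next I would discard the easy direction of that inner biconditional. Since $\sim_a^P$ is reflexive on $P$-permitted states (again by Definition~\ref{def:SyncEpistRel}, giving $\sigma \sim_a^P \sigma$), the truth of $K_a^P P_{ab}$ at $\sigma \in P(G)$ already forces the truth of $P_{ab}$ at $\sigma$. So the extension is epistemic iff for all $G$, all $a \neq b$, and all $\sigma \in P(G)$, $G,\sigma \models P_{ab}$ implies $G,\sigma \models K_a^P P_{ab}$; that is, iff $P_{ab} \to K_a^P P_{ab}$ holds at every $P$-permitted state.

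Finally I would relate this to syntactic epistemicity, which asks that $P_{ab} \to K_a^P P_{ab}$ be \emph{valid}, i.e.\ hold at \emph{all} gossip states $(G,\sigma)$, not merely the $P$-permitted ones. Closing this gap is the step I expect to be the main obstacle, and it is resolved by the same induction as above: at a state $(G,\sigma)$ whose $\sigma$ is not $P$-permitted, no sequence is $\sim_a^P$-related to $\sigma$, so $K_a^P P_{ab}$ is vacuously true there and $P_{ab} \to K_a^P P_{ab}$ holds trivially. Thus validity over all states is equivalent to validity over the $P$-permitted states alone, and the two characterizations of ``epistemic'' coincide, establishing both directions of the proposition at once.
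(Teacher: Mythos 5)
Your proof is correct and takes essentially the same route as the paper's: both arguments hinge on the fact that $\sim_a^P$ coincides with the epistemic relation induced by the extension, and on translating $(\sigma;ab)\in P(G)$ into $G,\sigma\models P_{ab}$ for $P$-permitted $\sigma$; the paper merely packages this as a chain of equivalences between the \emph{negations} of the two epistemicity properties. You are somewhat more explicit about two steps the paper leaves implicit --- discarding the right-to-left half of the semantic biconditional via reflexivity of $\sim_a^P$ on permitted states, and observing that $K_a^P P_{ab}$ is vacuously true at non-permitted states so that validity of $P_{ab}\to K_a^P P_{ab}$ reduces to its truth on the reachable part --- but these are the same underlying moves, not a different argument.
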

\begin{proof}
Let $Q$ be the extension of $P$ and note that, as remarked above, the epistemic relations induced by $P$ and $Q$ are identical.
Now we have the following chain of equivalences:
\begin{center}
\begin{tabular}{cl}
  & $P$ is not epistemic \\
  $\Leftrightarrow$ & $\exists a,b,G,\sigma: G,\sigma\not \models P_{ab}\rightarrow K_a^P P_{ab}$\\
  $\Leftrightarrow$ & $\exists a,b,G,\sigma,\tau: G,\sigma\models P_{ab}$, $G,\tau\not \models P_{ab}$ and $(G,\sigma)\sim_a^P (G,\tau)$\\
  $\Leftrightarrow$ & $\exists a,b,G,\sigma,\tau: (\sigma;ab)\in Q(G)$, $(\tau;ab)\not \in Q(G)$ and $(G,\sigma)\sim_a^P (G,\tau)$ \\
  $\Leftrightarrow$ & $\exists a,b,G,\sigma,\tau: (\sigma;ab)\in Q(G)$, $(\tau;ab)\not \in Q(G)$ and $(G,\sigma)\sim_a^{Q} (G,\tau)$ \\
  $\Leftrightarrow$ & $Q$ is not epistemic
\end{tabular}
\end{center}
\vspace{-1em}
\end{proof}

Note that Proposition~\ref{prop:epistemic_agrees} does not imply that every epistemic
semantic protocol is the extension of a syntactic epistemic protocol, since some
semantic protocols are not the extension of any syntactic protocol.

For symmetry, the situation is slightly more complex than for being epistemic.

\begin{proposition}\label{prop:symmetric_agrees}
If a syntactic protocol $P$ is symmetric, then its extension is symmetric.
\end{proposition}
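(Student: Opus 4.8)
The plan is to unwind both definitions and show that symmetry of the syntactic protocol $P$ transfers directly to its extension $Q$, by induction on the length of call sequences. Recall that the extension is defined level by level via $P_{i+1}(G)$, so a natural induction target is: for every initial gossip graph $G$, every permutation $J$ of agents, and every $i$, we have $\sigma \in P_i(G)$ iff $J(\sigma) \in P_i(J(G))$. Taking the union over $i$ then gives $J(P(G)) = P(J(G))$, which is exactly the semantic symmetry condition of Definition~\ref{def:SemSymEpis}. So the whole argument reduces to a clean induction on~$i$.

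The base case $i=0$ is immediate: $P_0(G) = \{\epsilon\}$ and $J(\epsilon) = \epsilon$, so $\epsilon \in P_0(J(G))$. For the inductive step, suppose the claim holds at level $i$. A sequence $\sigma;ab$ lies in $P_{i+1}(G)$ iff $\sigma \in P_i(G)$ and $G,\sigma \models P_{ab}$. I would apply $J$ to get $J(\sigma);J(a)J(b)$ and show this lies in $P_{i+1}(J(G))$. By the induction hypothesis, $\sigma \in P_i(G)$ gives $J(\sigma) \in P_i(J(G))$. The remaining obligation is the protocol-condition clause: I must show $G,\sigma \models P_{ab}$ iff $J(G), J(\sigma) \models P_{J(a)J(b)}$. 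By syntactic symmetry (Definition~\ref{def:SynSymEpis}) we have $P_{J(a)J(b)} = J(P_{ab})$, so this becomes the statement that satisfaction is preserved under the joint action of $J$ on graphs, sequences, and formulas: $G,\sigma \models \psi$ iff $J(G), J(\sigma) \models J(\psi)$.

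The main obstacle is therefore this last equivariance lemma for the satisfaction relation, which I would prove by induction on the structure of the formula $\psi$ (and mutually on programs). The atomic, Boolean, and program-modality cases are routine: $J$ simply relabels agents, and both the valuation of $N_{ab}, S_{ab}$ and the call-execution operator $G^{ab}$ commute with the relabeling because the update rules in Definition~\ref{def:calls} are symmetric in the agent names. The genuinely delicate case is the protocol-dependent knowledge operator $K_a^P\psi$, whose semantics goes through the epistemic relation $\sim_a^P$. Here I would first establish a parallel equivariance statement for $\sim_a^P$, namely that $(G,\sigma)\sim_a^P(G,\tau)$ iff $(J(G),J(\sigma))\sim_{J(a)}^{J(P)}(J(G),J(\tau))$, proved by induction on the clauses of Definition~\ref{def:SyncEpistRel}; since $P$ is symmetric we have $J(P)=P$ (its protocol conditions are permuted among themselves), so the relabeled relation is again $\sim_{J(a)}^P$. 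This is exactly the point where the simultaneous nature of Definitions~\ref{def:SyncEpistRel} and~\ref{def:Semantics} forces the two inductions on $\sim_a^P$ and on $\models$ to be carried out together, with the well-foundedness coming from the fact (noted after Definition~\ref{def:language}) that any protocol condition appearing inside $K_a^P$ refers only to strictly simpler protocols, so no vicious circularity arises.

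One further remark I would include to forestall confusion: unlike the "iff" we obtain here, the converse direction at the level of the proposition statement is not claimed, and indeed Proposition~\ref{prop:symmetric_agrees} is stated as a one-way implication only — symmetry of the extension does not force the syntactic protocol to be symmetric, since distinct syntactic protocols can share an extension. Our induction nevertheless yields the two-sided graph-level identity $J(P(G)) = P(J(G))$ for each fixed symmetric $P$, which is all that symmetry of the semantic protocol requires.
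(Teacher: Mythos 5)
Your proposal is correct and follows essentially the same route as the paper's proof: an induction on call sequences (the paper inducts on sequence length, you on the levels $P_i$, which is the same induction), reducing the inductive step to the equivalence $G,\sigma \models P_{ab}$ iff $J(G),J(\sigma)\models P_{J(a)J(b)}$ via the syntactic symmetry condition $J(P_{ab})=P_{J(a)J(b)}$. The only difference is that you explicitly isolate and sketch the equivariance lemma for the satisfaction relation (including the delicate $K_a^P$ case via equivariance of $\sim_a^P$ and the well-foundedness of the mutual induction), a step the paper's chain of equivalences uses implicitly without proof --- a point in your favor, not a divergence.
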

\begin{proof}
Let $Q$ be the extension of $P$. Fix any permutation $J$ and any initial gossip graph $G$.
To show is that $Q(J(G))=J(Q(G))$ (where $J$ is extended to gossip graphs in the natural way).
We show by induction that for every call sequence $\sigma$, we have $\sigma\in Q(J(G)) \Leftrightarrow \sigma \in J(Q(G))$.

As base case, note that $\epsilon\in Q(J(G))$ and $\epsilon \in J(Q(G))$.
Now, as induction hypothesis, assume that for every call sequence $\tau$ that is shorter than $\sigma$, we have $\tau\in Q(J(G)) \Leftrightarrow \tau \in J(Q(G))$.
Let $ab$ be the final call in $\sigma$, so $\sigma = (\tau;ab)$.
Then we have the following sequence of equivalences:
\begin{align*}
  (\tau;ab) \in Q(J(G))
    & \Leftrightarrow J(G),\tau \models P_{ab}\\
    & \Leftrightarrow G,J^{-1}(\tau)\models J^{-1}(P_{ab})\\
    & \Leftrightarrow G,J^{-1}(\tau)\models P_{J^{-1}(ab)}\\
    & \Leftrightarrow (J^{-1}(\tau);J^{-1}(ab))\in Q(G)\\
    & \Leftrightarrow (\tau;ab)\in J(Q(G)),
\end{align*}
where the equivalence on the third line is due to $P$ being symmetric.
This completes the induction step and thereby the proof.
\end{proof}
The converse of Proposition~\ref{prop:symmetric_agrees} does not hold:
if $P$ is not symmetric, it is still possible for its extension to be symmetric.
The reason for this discrepancy is that symmetry for syntactic protocols has the very strong condition that $J(P_{ab})=P_{J(ab)}$.
So if $P$ is symmetric and $P'$ is given by (i) $P'_{cd}=P_{cd}\wedge \top$ and (ii) $P'_{ab}=P_{ab}$ for $a,b \neq c,d$,
  then $P'$ is not symmetric even though $P$ and $P'$ have the same extension.
We do, however, have the following slightly weaker statement.
Recall that a gossip state $(G,\sigma)$ is $P$-reachable iff the call sequence $\sigma$ is $P$-permitted at $G$.
\begin{proposition}
Let $P$ be a syntactic protocol such that, for some $P$-reachable gossip state
$(G,\sigma)$, some permutation $J$ and some $a,b$ we have
  $G,\sigma\not \models P_{J(ab)}\leftrightarrow J(P_{ab})$.
Then the extension of $P$ is not symmetric.
\end{proposition}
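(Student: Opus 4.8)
The plan is to prove the contrapositive: assuming the extension $Q$ of $P$ is symmetric, I would show that the biconditional $P_{J(ab)} \leftrightarrow J(P_{ab})$ must in fact hold at every $P$-reachable gossip state, for every permutation $J$ and all agents $a,b$. This directly contradicts the hypothesis, and so forces the extension to be non-symmetric. This is essentially a localized, pointwise sharpening of the converse-failure discussion following Proposition~\ref{prop:symmetric_agrees}.

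The engine of the argument is the semantic invariance of truth under permutation already used in the proof of Proposition~\ref{prop:symmetric_agrees}: for every formula $\phi$, gossip graph $H$, call sequence $\rho$ and permutation $K$, one has $H,\rho \models \phi$ iff $K(H),K(\rho)\models K(\phi)$, proved by a routine induction on $\phi$ (the base and Boolean cases are immediate, and the modal and program cases use that $\sim_a^P$ and $\llbracket\pi\rrbracket$ commute with $K$). I would combine this with the definition of the extension (Definition~\ref{def:extension}), which says that for $\sigma \in Q(G)$ a one-call extension $\sigma;cd$ lies in $Q(G)$ exactly when $G,\sigma\models P_{cd}$.

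Concretely, fix a $P$-reachable $(G,\sigma)$, so $\sigma\in Q(G)$. First, by the extension definition, $G,\sigma\models P_{J(ab)}$ iff $\sigma;J(ab)\in Q(G)$. Second, by semantic invariance applied with $K=J^{-1}$, $G,\sigma\models J(P_{ab})$ iff $J^{-1}(G),J^{-1}(\sigma)\models P_{ab}$ (using $J^{-1}(J(P_{ab}))=P_{ab}$). I would then rewrite this right-hand side once more via the extension definition as $J^{-1}(\sigma);ab\in Q(J^{-1}(G))$, and finally transport it back along the symmetry equation $Q(G)=Q(J(J^{-1}(G)))=J(Q(J^{-1}(G)))$, noting $J(J^{-1}(\sigma);ab)=\sigma;J(ab)$, to reach $\sigma;J(ab)\in Q(G)$. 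Chaining these equivalences yields $G,\sigma\models P_{J(ab)}$ iff $G,\sigma\models J(P_{ab})$, i.e.\ the desired biconditional.

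The step requiring genuine care --- and the one I would flag as the main obstacle --- is the second use of the extension definition on the permuted side: to conclude ``$J^{-1}(G),J^{-1}(\sigma)\models P_{ab}$ iff $J^{-1}(\sigma);ab\in Q(J^{-1}(G))$'' I first need that $J^{-1}(\sigma)$ is itself $P$-reachable from $J^{-1}(G)$, that is, $J^{-1}(\sigma)\in Q(J^{-1}(G))$. This is exactly where the assumed symmetry of $Q$ enters: from $Q(J^{-1}(G))=J^{-1}(Q(G))$ and $\sigma\in Q(G)$ one gets $J^{-1}(\sigma)\in Q(J^{-1}(G))$, so symmetry transports reachability along permutations. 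If one argued directly rather than by contraposition, one would instead have to case-split on whether $J^{-1}(\sigma)$ is reachable, treating the unreachable case as an independent witness of non-symmetry (since then $\sigma\in Q(J(J^{-1}(G)))$ but $\sigma\notin J(Q(J^{-1}(G)))$); the contrapositive formulation absorbs this into the single symmetry hypothesis and keeps the argument free of case analysis.
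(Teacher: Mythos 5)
Your proof is correct and takes essentially the same route as the paper's: both assume symmetry of the extension $Q$ and derive $G,\sigma\models P_{J(ab)}\leftrightarrow J(P_{ab})$ at every $P$-reachable state via the same chain of equivalences, built from the definition of the extension, the symmetry equation $Q(J^{-1}(G))=J^{-1}(Q(G))$, and permutation invariance of the semantics (your contrapositive formulation is the paper's proof by contradiction in different clothing). The one point where you go beyond the paper's write-up is making explicit the reachability side-condition $J^{-1}(\sigma)\in Q(J^{-1}(G))$ needed to apply the extension definition on the permuted side --- the paper uses this silently --- so your version is the same argument, spelled out slightly more carefully.
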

\begin{proof}
Let $Q$ be the extension of $P$, and suppose towards a contradiction that $Q$ is symmetric.
Then we have the following sequence of equivalences:
\begin{align*}
G,\sigma\models P_{J(ab)} & \Leftrightarrow (\sigma;J(ab))\in Q(G)\\
& \Leftrightarrow (J^{-1}(\sigma);ab)\in J^{-1}(Q(G))\\
& \Leftrightarrow (J^{-1}(\sigma);ab)\in Q(J^{-1}(G))\\
& \Leftrightarrow J^{-1}(G),J^{-1}(\sigma)\models P_{ab}\\
& \Leftrightarrow G,\sigma\models J(P_{ab}),
\end{align*}
where the equivalence on the third line is due to $Q$ being symmetric.
This contradicts $G,\sigma\not \models P_{J(ab)}\leftrightarrow J(P_{ab})$, from which it follows that $Q$ is not symmetric.
\end{proof}
So while $P$ may be non-symmetric and still have a symmetric extension, this can only happen if $J(P_{ab})$ is equivalent to $P_{J(ab)}$ in all reachable gossip states.
We conclude that our syntactic and semantic definitions of symmetry agree up to logical equivalence.

\section{Strengthening of Protocols}\label{sec:strengthening}

\subsection{How can we strengthen a protocol?}

In our semantics it is common knowledge among the agents that they follow a certain
protocol, for example $\LNS$. Can they use this information to prevent making
``bad'' calls that lead to an unsuccessful sequence?

If we look at the execution graph given in Figure~\ref{figure:executionThree},
then it seems easy to fix the protocol. Agents $b$ and $c$ should wait and not
make the first call. Agent $b$ should not make a call before he has received a
call from $a$. We cannot say this in our logic as we have no converse modalities
to reason over past calls. In this case however, there is a different way to
ensure the same result. We can ensure that $b$ and $c$ wait before calling by a
strengthening of $\LNS$ that only allows a first call from $i$ to $j$ if $j$
does not know the number of $i$. To determine that a call is not the first call,
we need another property: after at least one call happened, there is an agent
who knows another agent's secret.

We can define this new protocol by protocol condition
  $P_{ij} := \LNS_{ij} \land ( \lnot N_j i \lor \bigvee_{k \neq l} S_k l)$.
Observe that this new protocol is again symmetric and epistemic:
agents always know whether $(\lnot N_j i \lor \bigvee_{k \neq l} S_k l)$.
Because of synchronicity, not only the callers but also all other agents know
that there are agents $k$ and $l$ such that $k$ knows the secret of $l$.
This is an ad-hoc solution specific to this initial gossip graph. Could we
also give a general definition to improve $\LNS$ which works on more or even all
initial graphs? The answer to that is: more, yes, but all, no.

We will now discuss different ways to improve protocols by making them more
restrictive. Our goal is to rule out unsuccessful sequences while keeping at
least some successful ones.
Doing this can be difficult because we still require the strengthened protocols to
be epistemic and symmetric. Hence we are not allowed to
arbitrarily rule out specific calls using the names of agents, for example.
Whenever a call is removed from the protocol, we also have to remove all calls
to other agents that the caller cannot distinguish: it has to be done \emph{uniformly}.
But before we discuss specific ideas for strengthening, let us define it.

\begin{definition}[Strengthening]
A protocol $P'$ is a \emph{syntactic strengthening} of a protocol $P$ iff
  $P_{ab}' \rightarrow P_{ab}$ is valid for all agents $a \neq b$.
A protocol $P'$ is a \emph{semantic strengthening} of a protocol $P$ iff
  $P' \subseteq P$.

A \emph{syntactic strengthening procedure} is a function $\heartsuit$ that for
any syntactic protocol $P$ returns a syntactic strengthening $P^\heartsuit$ of $P$.
Analogously, we define \emph{semantic strengthening procedure}.
\end{definition}

We stress that strengthening is a relation between two protocols $P$ and $P'$
whereas strengthening procedures define a restricting transformation that
given any $P$ tells us how to obtain $P'$.
In the case of a syntactic strengthening, $P$ and $P'$ are implicitly required
to be syntactic protocols. Vice versa however, syntactic protocols can be
semantic strengthenings. In fact, we have the following.

\begin{proposition}
Every syntactic strengthening is a semantic strengthening.
\end{proposition}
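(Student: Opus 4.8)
The plan is to unfold both definitions and reduce the claim to a routine induction on the length of call sequences. By definition, $P'$ being a \emph{syntactic} strengthening of $P$ means that $P'_{ab} \to P_{ab}$ is valid for all $a \neq b$, i.e., at every gossip state $(G,\sigma)$ we have that $G,\sigma \models P'_{ab}$ implies $G,\sigma \models P_{ab}$. What we must show is that $P'$ is a \emph{semantic} strengthening, namely $P' \subseteq P$, which by the convention fixed after Definition~\ref{def:extension} means $P'(G) \subseteq P(G)$ for every initial gossip graph $G \in \gographs$.

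So first I would fix an arbitrary $G \in \gographs$ and show, by induction on $i$, that $P'_i(G) \subseteq P_i(G)$; taking the union over all $i < \omega$ then yields $P'(G) \subseteq P(G)$, as required. For the base case, both $P'_0(G)$ and $P_0(G)$ equal $\{\epsilon\}$ by Definition~\ref{def:extension}, so the inclusion is immediate.

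For the inductive step, suppose $\sigma;ab \in P'_{i+1}(G)$. By Definition~\ref{def:extension} this means $\sigma \in P'_i(G)$ and $G,\sigma \models P'_{ab}$. The induction hypothesis gives $\sigma \in P_i(G)$. Since $P'_{ab} \to P_{ab}$ is valid, from $G,\sigma \models P'_{ab}$ we obtain $G,\sigma \models P_{ab}$. Applying Definition~\ref{def:extension} once more, $\sigma \in P_i(G)$ together with $G,\sigma \models P_{ab}$ yields $\sigma;ab \in P_{i+1}(G)$, which closes the step.

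The only point that requires a moment's care is that the truth value $G,\sigma \models P_{ab}$ is computed via the semantics of Definition~\ref{def:Semantics}, and in particular may depend on the epistemic relation $\sim_a^P$ when the condition $P_{ab}$ itself contains knowledge operators. But this is not a genuine obstacle: validity of $P'_{ab} \to P_{ab}$ already quantifies over all gossip states under precisely this semantics, so the implication applied at the fixed state $(G,\sigma)$ transfers the truth of $P'_{ab}$ to that of $P_{ab}$ directly. Thus the syntactic-strengthening hypothesis feeds exactly the one nontrivial inference in the induction step, and the argument goes through with no further complication.
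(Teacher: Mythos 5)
Your proof is correct and matches the paper's own argument: both proceed by induction on the length of the call sequence (your induction on the stage $i$ of the extension is the same thing, since $P_i(G)$ contains exactly the sequences of length $i$), using the validity of $P'_{ab} \to P_{ab}$ at the fixed state $(G,\sigma)$ for the one nontrivial step. Your closing remark about the knowledge operators in $P_{ab}$ is a fine extra sanity check but, as you note, adds nothing beyond what validity already gives.
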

\begin{proof}
Let $P'$ be a syntactic strengthening of a protocol $P$.
Let a gossip graph $G$ be given.
We show by induction on the length of $\sigma$ that
  $\sigma \in P'(G)$ implies $\sigma \in P(G)$.
The base case where $\sigma=\epsilon$ is trivial.

For the induction step, consider any $\sigma = \tau;ab$.
As $\tau;ab\in P'(G)$, we also have $\tau \in P'(G)$ and $G, \tau \models P'_{ab}$.
From $\tau \in P'(G)$ and the inductive hypothesis, it follows that $\tau \in P(G)$.
From $G, \tau \models P'_{ab}$ and the validity of $P'_{ab} \rightarrow P_{ab}$ follows $G, \tau \models P_{ab}$.
Finally, by Definition~\ref{def:extension}, $\tau\in P(G)$ and $G, \tau \models P_{ab}$ imply $\tau;ab \in P(G)$.
\end{proof}

\begin{lemma}\label{lemma:str}
Suppose $P$ is a strengthening of $Q$.
Then
  $K_a^Q \phi \rightarrow K_a^P \phi$
and
  $\hat{K}_a^P \phi \rightarrow \hat{K}_a^Q \phi$
are both valid, for any agent $a$.
\end{lemma}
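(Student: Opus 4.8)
The plan is to reduce both validities to a single monotonicity fact about the epistemic relations, namely that if $P$ is a strengthening of $Q$ then $\sim_a^P\,\subseteq\,\sim_a^Q$ for every agent $a$. Whether $P$ is given as a syntactic or a semantic strengthening of $Q$, the earlier proposition that every syntactic strengthening is a semantic strengthening lets us assume the semantic inclusion $P \subseteq Q$, i.e.\ $P(G)\subseteq Q(G)$ for every initial gossip graph $G$. The immediate consequence I want to extract is that whenever a call is $P$-permitted at a gossip state $(G,\sigma)$, it is also $Q$-permitted there.

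Next I would prove $\sim_a^P\,\subseteq\,\sim_a^Q$ by induction on the inductive definition of $\sim_a^P$ (Definition~\ref{def:SyncEpistRel}). The base clause is immediate, since $(G,\epsilon)\sim_a^Q(G,\epsilon)$ holds by the same clause for $Q$. For a clause~(2) step, a pair $(G,\sigma;ab)\sim_a^P(G,\tau;ab)$ arises from $(G,\sigma)\sim_a^P(G,\tau)$ together with $N^\sigma_b=N^\tau_b$, $S^\sigma_b=S^\tau_b$, and $ab$ being $P$-permitted at both states; the induction hypothesis gives $(G,\sigma)\sim_a^Q(G,\tau)$, the equalities of the $N$- and $S$-relations are protocol-independent, and $P$-permitted calls are $Q$-permitted, so the matching clause for $Q$ yields the pair in $\sim_a^Q$ (and symmetrically for $ba$). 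The clause~(3) step is analogous: from $(G,\sigma)\sim_a^P(G,\tau)$ with $c,d,e,f\neq a$ and $cd$, $ef$ respectively $P$-permitted, the induction hypothesis together with $P\subseteq Q$ places the same pair in $\sim_a^Q$. This establishes the inclusion.

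The two validities then follow directly from the truth definition of $K$. Suppose $G,\sigma\models K_a^Q\phi$, so $\phi$ holds at every $(G,\sigma')$ with $(G,\sigma')\sim_a^Q(G,\sigma)$. Since $\sim_a^P\,\subseteq\,\sim_a^Q$, every state $\sim_a^P$-related to $(G,\sigma)$ is in particular $\sim_a^Q$-related to it, hence satisfies $\phi$; therefore $G,\sigma\models K_a^P\phi$. This proves validity of $K_a^Q\phi\rightarrow K_a^P\phi$. The dual is just its contrapositive: unfolding $\hat{K}_a^P\phi=\lnot K_a^P\lnot\phi$, the implication $\hat{K}_a^P\phi\rightarrow\hat{K}_a^Q\phi$ is exactly $\lnot K_a^P\lnot\phi\rightarrow\lnot K_a^Q\lnot\phi$, i.e.\ the contrapositive of $K_a^Q\lnot\phi\rightarrow K_a^P\lnot\phi$, which is an instance of what was just proved with $\lnot\phi$ in place of $\phi$.

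I expect the only genuine work to be the inductive verification that $\sim_a^P\,\subseteq\,\sim_a^Q$, and the only point requiring care is that the epistemic relation and the truth definition are given by simultaneous induction. This is not a real obstacle here: the construction of $\sim_a^P$ consults the protocol only through the predicate ``$P$-permitted,'' which is monotone in the underlying set of permitted call sequences, so the inclusion $P\subseteq Q$ propagates through each clause without circularity, and the formula $\phi$ plays no role whatsoever in the argument for the relation inclusion. Once that inclusion is in hand, both validities are immediate.
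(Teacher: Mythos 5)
Your proof is correct and takes the same route the paper intends: the paper's one-line proof (``follows immediately from the semantics of protocol-dependent knowledge'') is implicitly appealing to exactly the monotonicity fact $\sim_a^P\,\subseteq\,\sim_a^Q$ that you establish by induction on Definition~\ref{def:SyncEpistRel}. Your write-up simply makes explicit what the paper leaves to the reader, including the correct observation that the $N$/$S$ side conditions are protocol-independent and that the simultaneous induction causes no circularity here.
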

\begin{proof}
This follows immediately from the semantics of protocol-dependent knowledge
given in Definition~\ref{def:Semantics}.
\end{proof}

\subsection{Syntactic Strengthening: Look-Ahead and One-Step}

We will now present concrete examples of syntactic strengthening procedures.

\begin{definition}[Look-Ahead and One-Step Strengthenings]\label{def:strengthening}
We define four syntactic strengthening procedures as follows.
Let $P$ be a protocol.
\[ \begin{array}{llll}
\text{hard look-ahead strengthening}: &
P^\hard_{ab} &:=& P_{ab} \land K_a^P [ab] \langle P \rangle \Exp \\
\text{soft look-ahead strengthening}: &
P^\soft_{ab} &:=& P_{ab} \land \hat{K}_a^P [ab] \langle P \rangle \Exp \\
\text{hard one-step strengthening}: &
P^\stephard_{ab} &:=& P_{ab} \land K_a^P [ab] (\Exp \lor \bigvee_{i,j} (N_i j \land P_{ij}) ) \\
\text{soft one-step strengthening}: &
P^\stepsoft_{ab} &:=& P_{ab} \land \hat{K}_a^P [ab] (\Exp \lor \bigvee_{i,j} (N_i j \land P_{ij}) )
\end{array}\]
\end{definition}

The \emph{hard} look-ahead strengthening allows agents to make a call iff the call is
allowed by the original protocol and moreover they \emph{know} that making this call
yields a situation where the original protocol can still succeed.

For example, consider $\LNS^\hard$. Informally, its condition is that $a$ is
permitted to call $b$ iff $a$ does not have the secret of $b$ and $a$ knows
that after making the call to $b$, it is still possible to follow $\LNS$ in
such a way that all agents become experts.

The \emph{soft} look-ahead strengthening allows more calls than the hard look-ahead
strengthening because it only demands that $a$ \emph{considers it possible} that
the protocol can succeed after the call. This can be interpreted as a good faith
or lucky draw assumption that the previous calls between other agents have been
made ``in a good way''. Soft look-ahead strengthening allows agents to take a risk.

The soft and the hard look-ahead strengthening include a diamond $\langle P \rangle$
labeled with the protocol P, where that protocol P by definition contains
arbitrary iteration: the Kleene star $\ast$.
To evaluate this, we need to compute the execution tree of $P$ for the initial gossip graph $G$.
In practice this can make it hard to check the protocol condition of the new protocol.

The \emph{one-step} strengthenings, in contrast, only use the protocol condition
$P_{ij}$ in their formalization and not the entire protocol $P$. This means that
they provide an easier to compute, but less reliable alternative to full
look-ahead, namely by looking only one step ahead.
We only demand that agent $a$ knows (or, in the soft version, considers it
possible) that after the call, everyone is an expert or the protocol can still go
on for at least one more step --- though it might be that all continuation
sequences will eventually be unsuccessful and thus this next call would already
have been excluded by both look-ahead strengthenings.

\bigskip

An obvious question now is, can these or other strengthenings get us from
weak to strong success? Do these strengthenings only remove unsuccessful
sequences, or will they also remove successful branches, and maybe even return
an empty and unsuccessful protocol?
In our next example everything still works fine.

\begin{example}\label{example:exagain}
Consider Example~\ref{example:executionThree} again.
It is easy to see that the soft and the hard look-ahead strengthening rule
out the two unsuccessful branches in this execution tree and keep the successful
ones. Protocol $\LNS^\hard$ only preserves alternatives that are all successful
and $\LNS^\soft$ only eliminates alternatives if they are all unsuccessful. In
the execution tree in Figure~\ref{figure:executionThree}, the effect is the same
for $\LNS^\hard$ and $\LNS^\soft$, because at any state the agents always know
which calls lead to successful branches.
This is typical for gossip scenarios with three agents: if a call happened, the
agent not involved in the call might be unsure about the direction of the call,
but it knows who the callers are.

The one-step strengthenings are not enough to rule out the unsuccessful
sequences. This is because the unsuccessful sequences are of length $2$ but
the one-step strengthenings can only remove the last call in a sequence. In
this case, the protocols $\LNS^\stephard$ and $\LNS^\stepsoft$ rule out
the call $ab$ after $bc$ or $cb$ happened.
\end{example}

\subsection{Semantic Strengthening: Uniform Backward Defoliation}

We now present two semantic strengthening procedures.
They are inspired by the notion of backward induction, a well-known solution concept in decision theory and game theory~\cite{OsbRub1994:GameTheory}.
We will discuss this at greater length when defining the arbitrary iteration of these semantic strengthenings and in Section~\ref{sec:conclusion}.

In backward induction, given a game tree or search tree, a parent node is called \emph{bad} if all its children are loosing or bad nodes.
Similarly, in trees with information sets of indistinguishable nodes, a parent node can be called bad if all its children are bad \emph{and if also all children from indistinguishable nodes are bad}.
Similar notions were considered in~\cite{BalSmeZve2009:KeepHoping,Perea2014:BelFutRat}.
Again, we have a soft and a hard version.
We define \emph{uniform backward defoliation} on the execution trees of dynamic gossip as follows to obtain two semantic strengthenings.
We choose the name ``defoliation'' here because a single application of this strengthening procedure only removes leaves and not whole branches of the execution tree.
The iterated versions we present later are then called \emph{uniform backward induction}.

\begin{definition}[Uniform Backward Defoliation]\label{def-UBR}
Suppose we have a protocol $P$ and an initial gossip graph $G$.
We define the \emph{Hard Uniform Backward Defoliation} $(\mathsf{HUBD})$
and \emph{Soft Uniform Backward Defoliation} $(\mathsf{SUBD})$ of $P$ as follows.
\[ \begin{array}{llll}
  P^{\mathsf{HUBD}}(G) & := &
    \{ \sigma \in P(G) \mid & \sigma = \epsilon, \text{ or } \sigma = \tau;ab \text{ and } \forall (G,\tau') \sim_a^P (G,\tau) \\
    & & & \text{such that } \tau' \in \overline{P(G)} \text{ implies } (G,\tau';ab) \models \Exp \} \\[0.5em]
  P^{\mathsf{SUBD}}(G) & := &
    \{ \sigma \in P(G) \mid & \sigma = \epsilon, \text{ or } \sigma = \tau;ab \text{ and } \exists (G,\tau') \sim_a^P (G,\tau) \\
    & & & \text{such that } \tau' \in \overline{P(G)} \text{ implies } (G,\tau';ab) \models \Exp \}
\end{array} \]
\end{definition}

In this definition, $\forall (G,\tau') \sim_a^P (G,\tau)$ implicitly stands for
  ``for all $\tau' \in P(G)$ such that $(G,\tau') \sim_a^P (G,\tau)$'',
because for $(G,\tau')$ to be in $\sim_a^P$ relation to another gossip state,
$\tau'$ must be $P$-permitted; similarly for the existential quantification.

The \textsf{HUBD} strengthening keeps the calls which \emph{must} lead to a
non-terminal state or a state where everyone is an expert and \textsf{SUBD}
keeps the calls which \emph{might} do so.
Equivalently, we can say that \textsf{HUBD} removes calls which may go wrong
and \textsf{SUBD} removes those calls which will go wrong --- where going
wrong means leading to a terminal node where not everyone is an expert.

We can now prove that for any gossip protocol
  \emph{Hard Uniform Backward Defoliation} is the same as \emph{Hard One-Step Strengthening},
in the sense that their extensions are the same on any gossip graph,
and that \emph{Soft Uniform Backward Defoliation} is the same as \emph{Soft One-Step Strengthening}.

\begin{theorem}\label{thm:ubr-is-one-step}
$P^\stephard = P^{\mathsf{HUBD}}$ and $P^\stepsoft = P^{\mathsf{SUBD}}$
\end{theorem}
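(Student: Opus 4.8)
The plan is to prove that the two extensions coincide on every initial gossip graph $G$, by induction on the length of call sequences, reducing everything to a single semantic ``unfolding'' of the one-step modal condition. Both $P^\stephard$ and $P^{\mathsf{HUBD}}$ only restrict $P$ (the former is a syntactic, hence semantic, strengthening; the latter is by definition a subset of $P(G)$), so it suffices to show, for every $P$-permitted $\tau;ab$, that $\tau;ab$ survives the syntactic one-step condition exactly when it survives defoliation; and dually for the soft versions.

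The core step, which I would isolate as a lemma and prove first, is the following semantic identity. For a $P$-permitted $\tau$ with $G,\tau\models P_{ab}$ (so in particular $N_ab$ holds at $\tau$),
$$G,\tau\models K_a^P[ab]\bigl(\Exp\lor\textstyle\bigvee_{i,j}(N_ij\land P_{ij})\bigr)$$
holds iff for every $\tau'$ with $(G,\tau')\sim_a^P(G,\tau)$, whenever $\tau';ab$ is terminal then $(G,\tau';ab)\models\Exp$. To prove it I would unwind Definition~\ref{def:Semantics} in three moves. First, the program $[ab]$ has the single successor $(G,\tau';ab)$ whenever $N_ab$ holds at $\tau'$; since $\sim_a^P$ preserves $N_a$ (clause~(2) of Definition~\ref{def:SyncEpistRel}) and relates only $P$-permitted states, $N_ab$ indeed holds at every $\tau'\sim_a^P\tau$, so the box genuinely takes the call at each such $\tau'$. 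Second, by the terminal-state guard $\Et_{i\neq j}\neg(N_ij\land P_{ij})$ appearing in the protocol program of Definition~\ref{def:protocol}, the disjunct $\bigvee_{i,j}(N_ij\land P_{ij})$ is true at a state precisely when that state is non-terminal; hence $\Exp\lor\bigvee_{i,j}(N_ij\land P_{ij})$ fails at $(G,\tau';ab)$ exactly when $\tau';ab$ is terminal and not all agents are experts, i.e.\ exactly at a ``bad leaf''. Third, $K_a^P$ quantifies universally over the $\sim_a^P$-class of $(G,\tau)$, which by the convention following Definition~\ref{def-UBR} consists only of $P$-permitted $\tau'$, and this yields the displayed right-hand side --- exactly the $\mathsf{HUBD}$ clause. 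The soft case is the precise dual: writing $\hat K_a^P=\lnot K_a^P\lnot$ turns the universal quantifier into ``there exists $\tau'\sim_a^P\tau$ such that $ab$ from $\tau'$ does not reach a bad leaf'', which is the $\mathsf{SUBD}$ clause.

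With this lemma the induction is routine. In the step for $\sigma=\tau;ab$, membership $\tau;ab\in P^\stephard(G)$ unfolds, via Definition~\ref{def:extension}, to $\tau\in P^\stephard(G)$ together with $G,\tau\models P^\stephard_{ab}=P_{ab}\land K_a^P[ab](\dots)$. By the lemma the second conjunct is equivalent to the defoliation condition, while $G,\tau\models P_{ab}$ together with $\tau\in P(G)$ is exactly $\tau;ab\in P(G)$. Comparing with the defining clause of $P^{\mathsf{HUBD}}(G)$ and invoking the induction hypothesis $\tau\in P^\stephard(G)\liff\tau\in P^{\mathsf{HUBD}}(G)$ then closes the equivalence, and the $\stepsoft$/$\mathsf{SUBD}$ case is verbatim the same with $\exists$ for $\forall$.

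I expect the main obstacle to be the bookkeeping around the recursive versus ``last-call'' shape of the two definitions. The extension of the syntactic protocol $P^\stephard$ requires \emph{every} prefix to satisfy its condition, whereas the defoliation clause as written constrains only the final call; making the induction hypothesis available therefore hinges on $P^{\mathsf{HUBD}}$ being closed under subsequences, so that $\tau;ab\in P^{\mathsf{HUBD}}(G)$ really does yield $\tau\in P^{\mathsf{HUBD}}(G)$. I would treat establishing this closure --- equivalently, confirming that defoliation only deletes calls at the frontier picked out by the lemma's condition --- as the delicate point, since it is exactly where the ``removes only leaves'' intuition has to be made rigorous. The remaining fussy steps (that $ab$ is possible and $P$-permitted at each indistinguishable $\tau'$, and that $\sim_a^P$ ranges only over $P$-permitted sequences on both sides) follow directly from Definition~\ref{def:SyncEpistRel} and the remarks immediately after Definition~\ref{def-UBR}.
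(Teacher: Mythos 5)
Your proposal is correct and takes essentially the same route as the paper: the paper's proof is exactly your core lemma, carried out as a single chain of equivalences unfolding Definition~\ref{def:extension} (extension), Definition~\ref{def:strengthening} (one-step strengthening), Definition~\ref{def:Semantics} (semantics of $K_a^P$ and $[ab]$), Definition~\ref{def-success} (terminality via the guard $\bigvee_{i,j}(N_ij \land P_{ij})$), and Definition~\ref{def-UBR}, for the hard and soft cases in parallel. Your explicit induction on sequence length and your flagging of prefix-closure of $P^{\mathsf{HUBD}}$ as the delicate point make rigorous a bookkeeping step the paper's one-shot chain silently elides (its first and last equivalences treat membership in $P^\stephard(G)$ and $P^{\mathsf{HUBD}}(G)$ as determined by the final call alone), so your write-up is, if anything, slightly more careful than the original.
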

\begin{proof}
Note that $\epsilon$ is an element of both sides of both equations.
For any non-empty sequence we have the following chain of equivalences for the
hard versions of UBD and one-step strengthening:
\[ \def\arraystretch{1.7} \begin{array}{llr}
   (\sigma;ab) \in P^\stephard(G) \\
   \Updownarrow \text{by Definition~\ref{def:extension}} \\
   G,\sigma \models P^\stephard_{ab} \\
\Updownarrow \text{by Definition~\ref{def:strengthening}} \\
  G,\sigma \models P_{ab} \land
  K_a^P [ab] \left(\bigvee_{i,j} (N_i j \land P_{ij}) \lor \Exp \right) \\
\Updownarrow \text{by Definition~\ref{def:Semantics}} \\
  (\sigma;ab) \in P(G) \text{ and }
  (G,\sigma) \vDash K_a^P [ab] \left( \bigvee_{i,j} (N_i j \land P_{ij}) \lor \Exp \right) \\
\Updownarrow \text{by Definition~\ref{def:Semantics}} \\
  (\sigma;ab) \in P(G) \text{ and }
  \forall (G,\sigma') \sim_a^P (G,\sigma) :
    (G,\sigma';ab) \models \bigvee_{i,j} (N_i j \land P_{ij}) \lor \Exp \\
\Updownarrow \text{by Definition~\ref{def-success}} \\
  (\sigma;ab) \in P(G) \text{ and }
  \forall (G,\sigma') \sim_a^P (G,\sigma) :
    \sigma';ab \notin \overline{P(G)} \text{ or } (G,\sigma';ab) \models \Exp \\
\Updownarrow \text{by Definition~\ref{def-UBR}} \\
  (\sigma;ab) \in P^{\mathsf{HUBD}}(G) \\
\end{array} \]

And we have a similar chain of equivalences for the soft versions:
\[ \arraycolsep=2pt\def\arraystretch{1.5} \begin{array}{l}
  (\sigma;ab) \in P^\stepsoft(G) \\
  \Updownarrow \text{by Definition~\ref{def:extension}} \\
  G,\sigma \models P^\stepsoft_{ab} \\
  \Updownarrow \text{by Definition~\ref{def:strengthening}} \\
  G,\sigma \models P_{ab} \land \hat{K}_a^P [ab] \left(\bigvee_{i,j} (N_i j \land P_{ij}) \lor \Exp \right) \\
  \Updownarrow \text{by Definition~\ref{def:Semantics}} \\
  (\sigma;ab) \in P(G) \text{ and } (G,\sigma) \models \hat{K}_a^P [ab] \left( \bigvee_{i,j} (N_i j \land P_{ij}) \lor \Exp \right) \\
  \Updownarrow \text{by Definition~\ref{def:Semantics}} \\
  (\sigma;ab) \in P(G) \text{ and }
    \exists (G,\sigma') \sim_a^P (G,\sigma) :
      (G,\sigma';ab) \models \bigvee_{i,j} (N_i j \land P_{ij}) \lor \Exp \\
  \Updownarrow \text{by Definition~\ref{def-success}} \\
  (\sigma;ab) \in P(G) \text{ and }
    \exists (G,\sigma') \sim_a^P (G,\sigma) :
      \sigma';ab \notin \overline{P(G)} \text{ or } (G,\sigma';ab) \models \Exp \\
  \Updownarrow \text{by Definition~\ref{def-UBR}} \\
    (\sigma;ab) \in P^{\mathsf{SUBD}}(G)
\end{array} \]
\end{proof}

Similarly to backward induction in perfect information games \cite{Aumann1995:BIandCKR}, uniform backward defoliation is \emph{rational}, in the sense that it forces an agent to avoid calls leading to unsuccessful sequences.
The strengthening SUBD avoids a call if it always leads to an unsuccessful sequence.
The strengthening HUBD avoids a call if it sometimes leads to a unsuccessful sequence.

\subsection{Iterated Strengthenings}\label{subsec:iterated-with-update}

The syntactic strengthenings we looked at are all defined in terms of the original protocol.
In $P^\hard_{ab} := P_{ab} \land K_a^P[ab] \langle P \rangle \Exp$ the given
$P$ occurs in three places.
Firstly, in the protocol condition $P_{ab}$ requiring that the call is permitted
according to the old protocol $P$ --- this ensures that the new protocol is a
strengthening of the original $P$.
Secondly, as a parameter to the knowledge operator, in $K^P_a$, which means that
agent $a$ knows that everyone followed $P$ (and that this is common knowledge).
Thirdly, in the part $\langle P \rangle$ assuming that after the considered call
everyone will continue to follow protocol $P$ in the future.

Hence we have strengthened the protocol that the agents use and thereby changed
their behavior, but not their assumptions about what protocol other agents follow.
For example, when $P = \LNS$, all agents now act according to $\LNS^\hard$, on
the assumption that all other agents act according to $\LNS$.
This does not mean that agents cannot determine what they know if $\LNS^\hard$
were common knowledge: each agent $a$ can check that knowledge using $K^{\LNS^\hard}_a \phi$.
But this $K^{\LNS^\hard}_a$ modality is not part of the protocol $\LNS^\hard$.
The agents do not use this knowledge to determine whether to make calls.

But why should our agents stop their reasoning here?
It is natural to iterate strengthening procedures and determine whether we can further improve our protocols by also
updating the knowledge of the agents.

For example, consider repeated hard one-step strengthening:
\[ {(P^\stephard)}^\stephard_{ab} = P^\stephard_{ab} \land
  \hat{K}_a^{P^\stephard} [ab] (\Exp \lor
                                \bigvee_{i,j} (N_i j \land P^\stephard_{ij}) ) \]

In this section we investigate iterations and combinations of strengthening procedures.
In particular we investigate various combinations of hard and soft one-step and
look-ahead strengthening, in order to determine how they relate to each other.

\begin{definition}[Strengthening Iteration]\label{def:iteration}
Let $P$ be a syntactic protocol. For any of the four syntactic strengthening procedures
  $\heartsuit \in \{ \hard, \soft, \stephard, \stepsoft \}$,
we define its iteration by adjusting the protocol condition as follows,
which implies $P^{\heartsuit 1} = P^\heartsuit$:
\[ \begin{array}{lll}
  P^{\heartsuit 0}_{ab}     & := & P_{ab} \\
  P^{\heartsuit (k+1)}_{ab} & := & {(P^{\heartsuit k})}^\heartsuit_{ab}
\end{array} \]
Let now $P$ be a semantic protocol, and let $\heartsuit \in \{\mathsf{HUBD},\mathsf{SUBD}\}$.
We define their iteration, for all gossip graphs $G$, by:
\[ \begin{array}{lll}
  P^{\heartsuit{0}}(G)     &:=& P(G) \\
  P^{\heartsuit{(k+1)}}(G) &:=& {(P^{\heartsuit{k}})}^{\heartsuit}(G)
\end{array} \]
\end{definition}

It is easy to check that Theorem~\ref{thm:ubr-is-one-step} generalizes to the
iterated strengthenings as follows.

\begin{corollary}\label{cor:hardsoft}
For any $k \in \mathbb{N}$, we have:
\[ P^{\stephard k} = P^{\mathsf{HUBD}k} \text{ and }
   P^{\stepsoft k} = P^{\mathsf{SUBD}k} \]
\end{corollary}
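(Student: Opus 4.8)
The plan is to prove Corollary~\ref{cor:hardsoft} by induction on $k$, using Theorem~\ref{thm:ubr-is-one-step} as the engine for each step. The base case $k=0$ is immediate: by Definition~\ref{def:iteration} we have $P^{\stephard 0}_{ab} = P_{ab}$, so $P^{\stephard 0}$ is just $P$ (viewed as a semantic protocol via its extension), and likewise $P^{\mathsf{HUBD}0}(G) = P(G)$ by the semantic clause of the same definition; the two agree trivially, and identically for the soft pair $P^{\stepsoft 0}$ and $P^{\mathsf{SUBD}0}$.

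For the induction step, I would assume as hypothesis that $P^{\stephard k} = P^{\mathsf{HUBD}k}$, meaning these two protocols have the same extension on every gossip graph $G$. The goal is to show $P^{\stephard(k+1)} = P^{\mathsf{HUBD}(k+1)}$. By Definition~\ref{def:iteration}, the left side unfolds as ${(P^{\stephard k})}^\stephard$ and the right side as ${(P^{\mathsf{HUBD}k})}^{\mathsf{HUBD}}$. Now I would apply Theorem~\ref{thm:ubr-is-one-step} \emph{to the protocol $P^{\stephard k}$} (rather than to the original $P$): the theorem states $Q^\stephard = Q^{\mathsf{HUBD}}$ for any protocol $Q$, so with $Q := P^{\stephard k}$ we get ${(P^{\stephard k})}^\stephard = {(P^{\stephard k})}^{\mathsf{HUBD}}$. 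Finally, the induction hypothesis $P^{\stephard k} = P^{\mathsf{HUBD}k}$ lets me replace the inner protocol, yielding ${(P^{\stephard k})}^{\mathsf{HUBD}} = {(P^{\mathsf{HUBD}k})}^{\mathsf{HUBD}} = P^{\mathsf{HUBD}(k+1)}$, which closes the step. The soft case is completely parallel, using the second equation of Theorem~\ref{thm:ubr-is-one-step} and the soft iteration clauses.

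The one point that deserves care is the interface between syntactic and semantic iteration, since Definition~\ref{def:iteration} gives separate recursions for the two. I would note that Theorem~\ref{thm:ubr-is-one-step} already equates a syntactic strengthening $Q^\stephard$ with a semantic one $Q^{\mathsf{HUBD}}$, so the equality $P^{\stephard k} = P^{\mathsf{HUBD}k}$ asserted by the hypothesis is an equality of extensions (semantic protocols), which is exactly the form in which the theorem consumes its input. Because the epistemic relation $\sim_a^P$ depends only on which calls are permitted, and because Theorem~\ref{thm:ubr-is-one-step} holds for \emph{any} protocol regardless of whether it arose syntactically or semantically, substituting $P^{\stephard k}$ for $Q$ is legitimate even though $P^{\stephard k}$ is being regarded through its extension. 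I expect this compatibility check to be the only genuine obstacle; the rest is a routine unfolding of the two recursions. This is precisely why the statement reads ``easy to check'': once the $k=0$ case and the single-step Theorem~\ref{thm:ubr-is-one-step} are in hand, the induction is mechanical.
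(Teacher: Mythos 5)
Your proof is correct and is exactly the argument the paper intends: its proof of Corollary~\ref{cor:hardsoft} reads in full ``By induction using Theorem~\ref{thm:ubr-is-one-step},'' and your write-up is a faithful elaboration of precisely that induction, applying the theorem to $Q := P^{\stephard k}$ (resp.\ $P^{\stepsoft k}$) and closing with the inductive hypothesis. Your added care about the syntactic/semantic interface --- justified because $\mathsf{HUBD}$/$\mathsf{SUBD}$ and the relation $\sim_a^P$ depend only on which calls are permitted, i.e.\ on the extension --- is a legitimate filling-in of a detail the paper leaves implicit, not a deviation.
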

\begin{proof}
By induction using Theorem~\ref{thm:ubr-is-one-step}.
\end{proof}

\begin{example}
We reconsider Examples~\ref{example:executionThree} and~\ref{example:exagain},
and we recall that $\LNS^\stephard$ and $\LNS^\stepsoft$ rule out the call
$ab$ after $bc$ or $cb$ happened. To eliminate $bc$ and $cb$ as the first call,
we have to iterate one-step strengthening: ${(\LNS^\stephard)}^\stephard$ is
strongly successful on this graph, as well as ${(\LNS^\stepsoft)}^\stepsoft$,
${(\LNS^\stephard)}^\stepsoft$ and ${(\LNS^\stepsoft)}^\stephard$.
\end{example}

\begin{example}\label{example:nExample}
We consider the ``N''-shaped gossip graph shown below.
There are 21 $\LNS$ sequences for this graph, of which $4$ are successful
($\checkmark$) and 17 are unsuccessful ($\times$).

\begin{center}
\begin{tikzpicture}[>=latex,line join=bevel,node distance=15mm,baseline=1]
  \node (1) {\textrm{1}};
  \node (0) [right of=1, node distance=10mm] {\textrm{0}};
  \node (3) [above of=1] {\textrm{3}};
  \node (2) [above of=0] {\textrm{2}};
  \draw [->,dashed] (3) -- (1);
  \draw [->,dashed] (2) -- (0);
  \draw [->,dashed] (3) -- (0);
\end{tikzpicture}
\hspace{1em}
$\begin{array}{ll}
  20;30;01;31    & \times \\
  20;30;31;01    & \times \\
  20;31;10;30    & \times \\
  20;31;30;10    & \times \\
  30;01;20;31    & \times \\
  30;01;31;20    & \times \\
  30;20;01;21;31 & \checkmark \\
\end{array}
\hspace{1em}
\begin{array}{ll}
  30;20;01;31;21 & \checkmark \\
  30;20;21;01;31 & \checkmark \\
  30;20;21;31;01 & \checkmark \\
  30;20;31;01;21 & \times \\
  30;20;31;21;01 & \times \\
  30;31;01;20    & \times \\
  30;31;20;01;21 & \times \\
\end{array}
\hspace{1em}
\begin{array}{ll}
  30;31;20;21;01 & \times \\
  31;10;20;30    & \times \\
  31;10;30;20    & \times \\
  31;20;10;30    & \times \\
  31;20;30;10    & \times \\
  31;30;10;20    & \times \\
  31;30;20;10    & \times \\
\end{array}$
\end{center}

\noindent We can show the call sequences in a more compact way if we only distinguish call sequences up
to the moment when it is decided whether $\LNS$ will succeed.
Formally, consider the set of minimal $\sigma \in \LNS(G)$ such that for all two
terminal $\LNS$-sequences $\tau,\tau' \in \overline{\LNS(G)}$ extending $\sigma$,
we have $G, \tau \models \Exp$ iff $G, \tau' \models \Exp$.
We will use this shortening convention throughout the paper.

\[\begin{array}{ll}
  20 & \times \\
  30;01 &  \times \\
  30;20;01 & \checkmark \\
  30;20;21 & \checkmark \\
  30;20;31 & \times \\
  30;31 &  \times \\
  31 & \times \\
\end{array}\]

It is pretty obvious what the agents should do here: Agent 2 should not make the
first call but let $3$ call $0$ first. The soft look-ahead strengthening works well
on this graph: It disallows all unsuccessful sequences and keeps all successful
ones. For example, after call $30$, agent $2$ considers it possible that call
$30$ happened and in this case the call $20$ can lead to success. Hence the
protocol condition of $\LNS^\soft$ is fulfilled.
The strengthening $\LNS^\soft$ is strongly successful on this graph.

But note that $2$ does not \emph{know} that $20$ can lead to success, because
the first call could have been $31$ as well and for agent $2$ this would be
indistinguishable from $30$.
Therefore the hard look-ahead strengthening is too restrictive here.
In fact, the only call which ${LNS}^\hard$ still allows is $30$ at the beginning.
After that no more calls are allowed by the hard look-ahead strengthening.

A full list showing which call sequences are allowed by which strengthenings of
$\LNS$ for this example is provided in Table~\ref{table:nExampleExtensions}.
``Full'' means that we continue iterating the strengthening until
$P^{\heartsuit{k}}(G) = P^{\heartsuit{(k + 1)}}(G)$ for the given graph $G$.
Such fixpoints of protocol strengthening will be formally introduced in the
next section.
\end{example}

The hard look-ahead strengthening restricts the set of allowed calls based on a
full analysis of the whole execution tree. One might thus expect, that applying
hard look-ahead more than once would not make a difference. However, we have
the following negative results on iterating hard look-ahead strengthening and
the combination of hard look-ahead and hard one-step strengthening.

\begin{fact}\label{fact:hard-idem-fix}
Hard look-ahead strengthening is not idempotent and does not always yield a
fixpoint of hard one-step strengthening:
\begin{enumerate}[(i)]
  \item There exist a graph $G$ and a protocol $P$ for which
    $P^\hard(G) \neq {(P^\hard)}^\hard(G)$.
  \item There exist a graph $G$ and a protocol $P$ for which
    ${(P^\hard)}^\stephard(G) \neq P^\hard(G)$.
\end{enumerate}
\end{fact}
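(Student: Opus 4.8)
The plan is to settle both parts of the Fact with a single counterexample: take $P = \LNS$ together with the ``N''-shaped graph $G$ of Example~\ref{example:nExample}. The whole argument rests on one computation that is already sketched informally there, namely that hard look-ahead collapses this graph to $\LNS^\hard(G) = \{\epsilon, 30\}$, after which the lone surviving call $30$ is itself removed by one further strengthening step, whether that step is hard look-ahead or hard one-step.

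First I would pin down $\LNS^\hard(G)$ precisely using the compact success table of Example~\ref{example:nExample}. At the root $(G,\epsilon)$ every agent knows the state by clause~1 of Definition~\ref{def:SyncEpistRel}, so for a first call $ij$ the condition $K_i^\LNS[ij]\langle \LNS\rangle\Exp$ reduces to the truth of $\langle \LNS\rangle\Exp$ at $(G,ij)$. Reading off the table, the first calls $20$ and $31$ admit no successful $\LNS$-continuation and are dropped, while $30$ admits one and survives; so $30$ is the only permitted first call. I would then show that no call survives at $(G,30)$: the calls $01$ and $31$ again have no successful continuation, and the only remaining candidate $20$ is killed by uniformity. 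Here the caller is agent $2$, who is not involved in $30$; by clause~3 of Definition~\ref{def:SyncEpistRel} we get $(G,30)\sim_2^\LNS(G,31)$, and since every continuation after $31$ fails, $[20]\langle \LNS\rangle\Exp$ is false at $(G,31)$, so $K_2^\LNS[20]\langle \LNS\rangle\Exp$ fails at $(G,30)$. Hence $\LNS^\hard(G) = \{\epsilon, 30\}$, and the crucial point is that $(G,30)$ is $\LNS^\hard$-terminal yet not successful, since only agents $0$ and $3$ have exchanged secrets.

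Both parts then drop out of this single state. For part~(i), to test whether $30$ is $(\LNS^\hard)^\hard$-permitted I evaluate $K_3^{\LNS^\hard}[30]\langle \LNS^\hard\rangle\Exp$ at the root, which reduces to $\langle \LNS^\hard\rangle\Exp$ at $(G,30)$; since $(G,30)$ is $\LNS^\hard$-terminal and not an expert state this diamond is false, so $30$ is removed and $(\LNS^\hard)^\hard(G) = \{\epsilon\} \neq \{\epsilon,30\} = \LNS^\hard(G)$. For part~(ii), the hard one-step condition for $30$ is $K_3^{\LNS^\hard}[30]\bigl(\Exp \lor \bigvee_{k,l}(N_kl \land \LNS^\hard_{kl})\bigr)$, which after the same reduction at the root is checked at $(G,30)$; but $(G,30)$ is neither an expert state nor does it admit any $\LNS^\hard$-permitted call, so the disjunction is false there and $30$ is again removed, giving $(\LNS^\hard)^\stephard(G) = \{\epsilon\} \neq \LNS^\hard(G)$.

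The only genuinely delicate step is the uniformity argument of the second paragraph: the call $20$ is individually part of a successful $\LNS$-sequence, yet hard look-ahead must forbid it because the non-involved caller $2$ cannot distinguish $30$ from the hopeless $31$. This indistinguishability (clause~3 of Definition~\ref{def:SyncEpistRel}) is exactly what makes $\LNS^\hard(G)$ collapse to $\{\epsilon,30\}$ instead of retaining its successful branches, and it is the one place where the epistemic content of the definition really does the work; everything else is routine bookkeeping about which states are terminal and which are expert states, for which I would appeal to the accompanying Haskell enumeration to confirm the secret- and number-sets at $(G,30)$ and the relevant entries of the success table.
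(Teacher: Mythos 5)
Your proof is correct, and it splits interestingly against the paper's. For part (i) you follow the paper exactly: same graph (the ``N'' graph), same protocol $P=\LNS$, same collapse $\LNS^\hard(G)=\{\epsilon,30\}\mapsto\{\epsilon\}$; your detailed verification (identity of $\sim^{\LNS}_a$ at the root via clause~1, the elimination of $20$ at $(G,30)$ via the clause-3 confusion $(G,30)\sim_2^{\LNS}(G,31)$) fills in what the paper only asserts, and your $\{\epsilon\}$ is in fact more accurate than the paper's ``$\varnothing$'', since by Definition~\ref{def:extension} every extension contains $\epsilon$. For part (ii), however, you take a genuinely different route: the paper abandons the ``N'' graph and instead uses the ``diamond'' graph of Section~\ref{subsec:diamond}, establishing ${(\LNS^\hard)}^\stephard(G)\neq\LNS^\hard(G)$ by exhaustive search with the Haskell implementation ($\LNS^\hard$ keeps $8{+}8$ sequences, of which $\stephard$ removes $4$ unsuccessful ones), whereas you reuse the ``N'' graph and observe that the single fact that $(G,30)$ is an $\LNS^\hard$-terminal non-expert state simultaneously falsifies the look-ahead diamond $\langle \LNS^\hard\rangle\Exp$ and the one-step disjunction $\Exp\lor\bigvee_{i,j}(N_ij\land\LNS^\hard_{ij})$ at $(G,30)$, so the lone call $30$ dies under either procedure. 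I checked this: at $(G,30)$ the calls $01$ and $31$ fail their look-ahead conditions outright and $20$ fails by the same uniformity argument, so no call is $\LNS^\hard$-permitted there, and hence ${(\LNS^\hard)}^\stephard(G)=\{\epsilon\}\neq\{\epsilon,30\}$ --- your argument is sound. What each approach buys: yours is self-contained, needs no computer search, and handles both parts with one counterexample; the paper's diamond example exhibits the inequality in a non-degenerate regime (the strengthened protocol still has $16$ surviving sequences rather than collapsing to $\{\epsilon\}$) and supports their remark that no smaller graph works for $P=\LNS$. Note the mild tension with that remark: since your ``N'' graph (four agents, three edges) also witnesses (ii), the paper's minimality claim can only be read as ``no graph with fewer agents'' --- both graphs have four agents --- and your example shows the diamond is not edge-minimal.
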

\begin{proof} \
\begin{enumerate}[(i)]
  \item Let $G$ be the ``N'' graph from Example~\ref{example:nExample} and
    consider the protocol $P = \LNS$. Applying hard look-ahead strengthening
    once only allows the first call $30$ and nothing after that call.
    If we now apply hard look-ahead strengthening again we get the empty set:
    $P^\hard(G) \neq {(P^\hard)}^\hard(G) = \varnothing$.
    See also Table~\ref{table:nExampleExtensions}.
  \item The ``diamond'' graph that we will present in Section~\ref{subsec:diamond}
    can serve as an example here. We can show that the inequality holds for this
    graph by exhaustive search, using our Haskell implementation described in the
    Appendix.
    Plain $\LNS$ has $48$ successful and $44$ unsuccessful sequences on this
    graph. Of these, $\LNS^\hard$ still includes $8$ successful and $8$
    unsuccessful sequences. If we now apply hard one-step strengthening, we get
    ${(\LNS^\hard)}^\stephard$ where $4$ of the unsuccessful sequences are removed.
    See also Table~\ref{table:DiamondExampleExtensions} in the Appendix.
    We note that for $P = LNS$ there is no smaller graph to show the inequality.
    This can be checked by manual reasoning or with our implementation.
    \qedhere
\end{enumerate}
\end{proof}

\noindent

Similarly, we can ask whether the soft strengthenings are related to each other,
analogous to Fact~\ref{fact:hard-idem-fix}. We do not know whether there is a
protocol $P$ for which ${(P^\soft)}^\stepsoft \neq P^\soft$ and leave this as
an open question.

Another interesting property that strengthenings can have is \emph{monotonicity}.
Intuitively, a strengthening is monotone iff it preserves the inclusion relation
between extensions of protocols. This property is useful to study the fixpoint
behavior of strengthenings.
We will now define monotonicity formally and then obtain some results for it.

\begin{definition}\label{def:monotone}
A strengthening $\heartsuit$ is called \emph{monotone} iff
  for all protocols $Q$ and $P$ such that $Q \subseteq P$,
  we also have $Q^\heartsuit \subseteq P^\heartsuit$.
\end{definition}

\begin{proposition}[Soft one-step strengthening is monotone]
Let $P$ be a protocol and $Q$ be an arbitrary strengthening of $P$, i.e. $Q \subseteq P$.
Then we also have $Q^\stepsoft \subseteq P^\stepsoft$.
\end{proposition}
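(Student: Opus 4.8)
The plan is to argue at the level of extensions and to pass through the semantic form of the strengthening. By Theorem~\ref{thm:ubr-is-one-step} we have $Q^\stepsoft = Q^{\mathsf{SUBD}}$ and $P^\stepsoft = P^{\mathsf{SUBD}}$, so it is enough to show $Q^{\mathsf{SUBD}}(G) \subseteq P^{\mathsf{SUBD}}(G)$ for every initial gossip graph $G$. I would take an arbitrary $\sigma \in Q^{\mathsf{SUBD}}(G)$, dispose of $\sigma = \epsilon$ immediately, and otherwise write $\sigma = \tau;ab$. Since $Q^{\mathsf{SUBD}}(G) \subseteq Q(G) \subseteq P(G)$ (the second inclusion because $Q \subseteq P$), we already know $\tau;ab \in P(G)$. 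Unfolding Definition~\ref{def-UBR} for $Q$ hands me a witness $\tau'$ with $(G,\tau') \sim_a^Q (G,\tau)$ such that $\tau';ab$ is non-terminal for $Q$ or makes everyone an expert; the whole strategy is to reuse this same $\tau'$ as the $P$-witness.

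Two observations make the transfer go through. First, because $Q \subseteq P$ turns every $Q$-permitted call into a $P$-permitted one, a routine induction over the three clauses of Definition~\ref{def:SyncEpistRel} gives $\sim_a^Q \subseteq \sim_a^P$ (this is just the relational content behind Lemma~\ref{lemma:str}); hence $(G,\tau') \sim_a^P (G,\tau)$, and since $\tau' \in Q(G) \subseteq P(G)$ this is a legitimate $P$-witness. Second, I would invoke \emph{epistemicity} of $Q$ (Definition~\ref{def:SemSymEpis}): from $\tau;ab \in Q(G)$ and $(G,\tau') \sim_a^Q (G,\tau)$ it follows that $\tau';ab \in Q(G)$ as well. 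This second point is what keeps the witness honest --- without it $\tau'$ might not even admit the call $ab$ under $Q$.

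It then remains to check the $P$-side disjunction for $\tau'$. If $(G,\tau';ab) \models \Exp$ there is nothing to prove. Otherwise the $Q$-condition forces $\tau';ab$ to be non-terminal for $Q$; combined with $\tau';ab \in Q(G)$ this produces a $Q$-permitted continuation $\tau';ab;cd$, which lies in $P(G)$ because $Q \subseteq P$, and closure of semantic protocols under subsequences (Definition~\ref{def:semanticprotocol}) then witnesses that $\tau';ab$ is non-terminal for $P$ too. Either way $\tau'$ certifies $\tau;ab \in P^{\mathsf{SUBD}}(G)$, which finishes the inclusion.

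The delicate point I expect to be the real obstacle is that a single strengthening secretly refers to \emph{two} protocols at once: both the indistinguishability relation $\sim_a$ and the notion of ``terminal'' are read relative to $Q$ on the left but relative to $P$ on the right, so transplanting the witness is not automatic. The resolution splits in two directions --- the relation inclusion $\sim_a^Q \subseteq \sim_a^P$ lets the witness survive the change of protocol, while non-terminality happily transfers \emph{upward} from $Q$ to $P$ (a $Q$-continuation is always a $P$-continuation, the easy direction), and epistemicity is exactly the ingredient that prevents a sequence that is terminal under $P$ but absent from $Q$ from spoiling the argument.
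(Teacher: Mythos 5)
Your proof is correct, but it takes a genuinely different route from the paper's. The paper argues purely syntactically, at the level of protocol conditions, in a few lines: from the validity of $Q_{ab}\rightarrow P_{ab}$ it replaces $Q_{ij}$ by $P_{ij}$ inside the one-step condition (the modal context $\hat{K}_a^Q[ab](\cdot)$ is monotone), and then applies Lemma~\ref{lemma:str} to weaken $\hat{K}_a^Q$ to $\hat{K}_a^P$; epistemicity is never mentioned. You instead detour through Theorem~\ref{thm:ubr-is-one-step} to the semantic formulation $\mathsf{SUBD}$ and transplant the existential witness $\tau'$, which forces you to prove two transfer facts the paper's syntactic argument absorbs implicitly: the relation inclusion $\sim_a^Q\,\subseteq\,\sim_a^P$ (indeed the relational content behind Lemma~\ref{lemma:str}, and your induction over the clauses of Definition~\ref{def:SyncEpistRel} is sound, since every $Q$-permitted call is $P$-permitted under $Q\subseteq P$) and the use of epistemicity of $Q$ to secure $\tau';ab\in Q(G)$, without which non-terminality would not transfer. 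Each approach buys something. The paper's proof is shorter and works uniformly at the level of validities; but note its opening move silently upgrades the stated hypothesis $Q\subseteq P$ (a semantic inclusion of extensions) to the stronger assumption that $Q_{ab}\rightarrow P_{ab}$ is \emph{valid}, i.e., that $Q$ is a syntactic strengthening --- your extension-level argument uses only the inclusion $Q(G)\subseteq P(G)$ as literally stated, so it is arguably more faithful to the proposition, and it makes visible exactly where the standing assumption that protocols are epistemic does real work. One small point in your favor worth flagging: you read Definition~\ref{def-UBR} as ``$\tau';ab\in\overline{Q(G)}$ implies $(G,\tau';ab)\models\Exp$,'' which is the reading the paper itself uses in the proof of Theorem~\ref{thm:ubr-is-one-step}, rather than the condition ``$\tau'\in\overline{P(G)}$'' as typeset in the definition; that is the correct choice, and your argument goes through with it.
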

\begin{proof}
As $Q$ is a strengthening of $P$, the formula $Q_{ab} \lthen P_{ab}$ is valid.
We want to show that $Q^\stepsoft_{ab} \lthen P^\stepsoft_{ab}$.
Suppose that $G,\sigma \models Q^\stepsoft_{ab}$, i.e.:
\[ G,\sigma \models Q_{ab} \text{ and } G,\sigma \models
  \hat{K}_a^{Q} [ab] ( \Exp \lor \bigvee_{i,j} (N_i j \land Q_{ij}) ) \]
From the first part and the validity of $Q_{ab} \lthen P_{ab}$, we get $G,\sigma \models P_{ab}$.
The second part and the validity of $Q_{ij} \lthen P_{ij}$ give us
  $G,\sigma \models \hat{K}_a^{Q} [ab] (\Exp \lor \bigvee_{i,j} (N_i j \land P_{ij}))$.
From that and Lemma~\ref{lemma:str} it follows that
  $G,\sigma \models \hat{K}_a^P [ab] (\Exp \lor \bigvee_{i,j} (N_i j \land P_{ij}))$.
Combining these, it follows by definition of soft one-step strengthening that
we have $G,\sigma \models P^\stepsoft_{ab}$.
\end{proof}

\begin{proposition}[Both hard strengthenings are not monotone]\label{prop:hard-non-mono}
Let $P$ and $Q$ be protocols. If $Q \subseteq P$, then
$(i)$ $Q^\hard \subseteq P^\hard$ may not hold, and also
$(ii)$ $Q^\stephard \subseteq P^\stephard$ may not hold.
\end{proposition}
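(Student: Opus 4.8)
The statement is negative, so the plan is to exhibit explicit counterexamples rather than argue a general inclusion. The conceptual reason monotonicity fails for the hard procedures is that both $P^\hard_{ab}$ and $P^\stephard_{ab}$ (Definition~\ref{def:strengthening}) require \emph{positive} protocol-dependent knowledge $K_a^P(\cdots)$, and refining the base protocol refines the information partition. Concretely, if $Q \subseteq P$ then $\sim_a^Q\,\subseteq\,\sim_a^P$ on the reachable states (an easy induction over Definition~\ref{def:SyncEpistRel}, using that a $Q$-permitted call is $P$-permitted), so the $Q$-classes are contained in the $P$-classes. By Lemma~\ref{lemma:str} we therefore only get $K_a^P\phi \rightarrow K_a^Q\phi$, and the converse fails in general. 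Hence the finer partition of the more restrictive protocol $Q$ can make an agent \emph{know} that a call keeps success reachable, while under the coarser partition of $P$ that same agent does not know it. This is exactly the asymmetry needed to break $Q^\hard \subseteq P^\hard$ and $Q^\stephard \subseteq P^\stephard$.

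Accordingly, for part $(i)$ I would look for an initial gossip graph $G$, protocols $Q \subseteq P$, a $Q^\hard$-reachable state $(G,\sigma)$ and a call $ab$ with the following profile. Every $Q$-history indistinguishable to $a$ from $\sigma$ should satisfy $[ab]\langle Q\rangle\Exp$, so that $G,\sigma \models K_a^Q[ab]\langle Q\rangle\Exp$ and thus $(\sigma;ab)\in Q^\hard(G)$. At the same time there should be at least one $P$-history $\tau' \sim_a^P \sigma$ that $Q$ forbids and from which the call $ab$ leads to a state with no successful $P$-continuation, so that $G,\sigma \not\models K_a^P[ab]\langle P\rangle\Exp$ and hence $(\sigma;ab)\notin P^\hard(G)$. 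Together these give $Q^\hard\not\subseteq P^\hard$. Since the first call from $\epsilon$ never reveals a partition difference, as at $\epsilon$ the relation $\sim_a^P$ only loops by clause (1) of Definition~\ref{def:SyncEpistRel}, the witnessing $\sigma$ must be nonempty, with $a$ an outsider to at least one prior call.

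For part $(ii)$ I would repeat the same construction, replacing the look-ahead target $\langle P\rangle\Exp$ by the one-step target $\Exp \lor \bigvee_{i,j}(N_ij \land P_{ij})$ from Definition~\ref{def:strengthening}: choose the witnessing state so that under $Q$ agent $a$ knows the call $ab$ leads to an expert-or-nonterminal state, while under $P$ it does not, because a $P$-possible-but-$Q$-excluded alternative history drives $ab$ into an unsuccessful terminal state. The identical partition-refinement mechanism then yields $Q^\stephard \not\subseteq P^\stephard$.

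The main obstacle is the engineering of the witness: one must arrange a reachable state whose $P$-partition contains a ``spoiler'' history that defeats $a$'s knowledge of continued success, while the finer $Q$-partition excludes exactly that spoiler, \emph{and} simultaneously keep $Q \subseteq P$ together with symmetry and the epistemic property, and then check all the membership facts ($\sigma \in Q^\hard(G)$, the two knowledge (non-)validities, and reachability of the spoiler). A convenient route is to take $P$ a familiar base such as $\LNS$ and let $Q$ be a handcrafted symmetric, epistemic strengthening tailored to remove the spoiler, or to reuse a graph like the ``diamond'' of Section~\ref{subsec:diamond} where hard strengthenings already behave non-idempotently (cf.\ Fact~\ref{fact:hard-idem-fix}); in either case the remaining bookkeeping of call sequences is naturally discharged by the Haskell implementation described in the Appendix.
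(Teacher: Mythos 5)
Your diagnosis of the underlying mechanism is exactly the one the paper exploits: for $Q \subseteq P$ the relation $\sim_a^Q$ refines $\sim_a^P$ on reachable states, so by (the role-swapped) Lemma~\ref{lemma:str} only $K_a^P\phi \rightarrow K_a^Q\phi$ holds, and an agent can \emph{know} under $Q$ that a call preserves success while failing to know it under $P$ because a $P$-possible but $Q$-excluded ``spoiler'' history survives in her $P$-information set. Your observation that the witness must occur strictly after the first call is also correct and is borne out by the paper's example, where the spoiler is the possible first call $12$.

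The genuine gap is that for a ``may not hold'' statement the proof \emph{is} the counterexample, and you never produce one: your text is a specification of the required witness profile plus a promise that it can be engineered and machine-checked, which is precisely the part you identify as ``the main obstacle'' and then leave undone. The paper instantiates the template on the four-agent ``spaceship'' graph with $P = \LNS$ and $Q = \LNS^\soft$, and it uses a simplification your blueprint misses: on that graph $\LNS^\soft$ is \emph{strongly successful}, so both hard procedures fix it, $Q^\hard(G) = Q^\stephard(G) = Q(G)$, with no per-state knowledge checks needed on the $Q$-side at all; the only computation is on the $P$-side, where $\LNS^\hard(G) = \{(01),(31)\}$ and $(01;31;12;02;32) \in \LNS^\soft(G) \setminus \LNS^\stephard(G)$, giving both non-inclusions at once. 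Your fallback suggestions do not obviously close the gap either: the ``diamond'' graph in Fact~\ref{fact:hard-idem-fix}$(ii)$ witnesses non-idempotence of the hard strengthenings, not non-monotonicity, and $\LNS^\soft$ is \emph{not} strongly successful on the diamond (it retains $8$ unsuccessful sequences), so the strong-success trick does not transfer; and a handcrafted $Q$ would further require verifying symmetry and the epistemic property, which your sketch defers entirely. So the approach is sound and matches the paper's in spirit, but as it stands the proposal is an unproved existence claim rather than a proof.
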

\begin{proof}
(i) \emph{Hard one-step strengthening is not monotone}:

Consider the ``spaceship'' graph below with four agents 0, 1, 2 and 3
where 0 and 3 know 1's number, 1 knows 2's number, and 2 knows no numbers.
\begin{center}
  \begin{tikzpicture}[>=latex,line join=bevel,node distance=1.5cm,baseline=1]
    \node (0)  {0};
    \node (1) [right of=0,below of=0,node distance=14mm] {1};
    \node (3) [left of=1,below of=1,node distance=14mm] {3};
    \node (2) [right of=1] {2};
    \draw [->,dashed] (0) -- (1);
    \draw [->,dashed] (3) -- (1);
    \draw [->,dashed] (1) -- (2);
  \end{tikzpicture}
\end{center}
On this graph the $\LNS$ sequences up to decision point are:
\[\begin{array}{ll}
  01;02    & \times     \\
  01;12    & \times     \\
  01;31;02 & \times     \\
\end{array}\hspace{1em}\begin{array}{ll}
  01;31;12 & \checkmark \\
  01;31;32 & \checkmark \\
  12       & \times     \\
\end{array}\hspace{1em}\begin{array}{ll}
  31;01;02 & \checkmark \\
  31;01;12 & \checkmark \\
  31;01;32 & \times     \\
\end{array}\hspace{1em}\begin{array}{ll}
  31;12    & \times     \\
  31;32    & \times     \\
  \  \\
\end{array}\]
Note that
\[ \LNS^\soft(G) = \left \{ \hspace{-3.2pt} \begin{array}{l}
  (01;31;12;02;32), (01;31;12;32;02), (01;31;32;02;12), \\
  (01;31;32;12;02), (31;01;02;12;32), (31;01;02;32;12), \\
  (31;01;12;02;32), (31;01;12;32;02)
  \end{array} \hspace{-3.2pt} \right \} \]
is strongly successful and therefore hard one-step strengthening does not
change it --- we have ${(\LNS^\soft)}^\stephard(G) = \LNS^\soft(G)$.
On the other hand, consider
\[ \LNS^\stephard(G) = \left \{ \hspace{-3pt}
  \begin{array}{l}
  (01;02;12), (01;12;02), (01;31;02;12), (01;31;02;32), \\
  (01;31;12;32;02), (01;31;32;12;02), (12;01), (12;31), \\
  (31;01;02;12;32), (31;01;12;02;32), (31;01;32;02), \\
  (31;01;32;12), (31;12;32), (31;32;12)
  \end{array} \hspace{-3pt} \right \}
\]
and note that this is not a superset of ${(\LNS^\soft)}^\stephard(G) = \LNS^\soft(G)$,
because we have $(01;31;12;02;32) \in {(\LNS^\soft)}^\stephard(G) = \LNS^\soft(G)$
but $(01;31;12;02;32) \notin \LNS^\stephard(G)$.

Together, we have $\LNS^\soft(G) \subseteq \LNS(G)$ but
${(\LNS^\soft)}^\stephard(G) \not\subseteq \LNS^\stephard(G)$.

Hence $Q = \LNS^\soft \subseteq \LNS = P$ is a counterexample and $\stephard$ is not monotone.

\bigskip

\noindent (ii) \emph{Hard look-ahead strengthening is not monotone}:

For hard look-ahead strengthening we can use the same example.
Because $\LNS^\soft$ is strongly successful, hard look-ahead strengthening does not change it:
  ${(\LNS^\soft)}^\hard(G) = \LNS^\soft(G)$.

Moreover, $\LNS^\hard(G) = \{ (01), (31) \}$ is not a superset of
${(\LNS^\soft)}^\hard(G) = \LNS^\soft(G)$.

Together we have $\LNS^\soft(G) \subseteq \LNS(G)$ but
  ${(\LNS^\soft)}^\hard(G) \not\subseteq \LNS^\hard(G)$,
hence hard look-ahead strengthening is not monotone either.
\end{proof}

This result is relevant for our pursuit to pin down how rational agents can employ common knowledge of a protocol to improve upon it.
It shows that hard look-ahead strengthening is not rational, as follows.

We consider again the ``spaceship'' graph in the proof of Proposition~\ref{prop:hard-non-mono}.
Let us define a \emph{bad call} as a call after which no successful continuation is possible.
Correspondingly, a \emph{good call} is one after which success is still possible.
The initial call could be $12$, but that is a bad call.
All successful $\LNS$ sequences on this graph start with $01;31$ or $31;01$.

Let us place ourselves in the position of agent 3 after the call $01$ has been made.
As far as 3 can tell (if the only background common knowledge is that everyone
follows $\LNS$), the first call may have been 12, at which point no agent can
make a good call because no continuation is successful.
In particular, the second call 31 is then bad.
So 3 will not call 1, because it is possible that the call $31$ is bad, and we are following hard look-ahead.

Symmetrically, the same reasoning is made by agent 0: even if the first call is
$31$, it could also have been $12$, after which any continuation is unsuccessful,
and therefore 0 will not call 1, which again seems irrational.

So nobody will make a call. The extension of $\LNS^\hard$ on this graph is empty.

But as all agents know that $12$ is bad, agent 1 knows this in particular, and
as agent 1 is rational herself, she would therefore not have made that call.
And agents 3 and 0 can draw that conclusion too.
It therefore seems after all irrational for 3 not to call 1, or for 0 not to call 1.

This shows that hard look-ahead strengthening is not rational.
In particular, it ignores the rationality of other agents.

\subsection{Limits and Fixpoints of Strengthenings}\label{subsec:fixpoints}

Given the iteration of strengthenings we discussed in the previous section, it
is natural to consider limits and fixpoints of strengthening procedures.
In this subsection we discuss them and give some small results.
A detailed investigation is deferred to future research.

Note that the protocol conditions of all four basic syntactic strengthenings are
conjunctions with the original protocol condition as a conjunct.
Therefore, all these four strengthenings are \emph{non-increasing}:
For all $\heartsuit \in \{ \hard, \soft, \stephard, \stepsoft \}$
  and all protocols $P$, we have
    $P^\heartsuit \subseteq P$.
The same holds, by definition, for semantic strengthenings.
This implies that if, on any gossip graph, we start with a protocol that only
allows finite call sequences, such as $\LNS$, then applying strengthening
repeatedly will eventually lead to a fixpoint. This fixpoint might be the empty
set, or a non-empty set and thereby provide a new protocol.

For other protocols that allow infinite call sequences, such as $\ANY$, we do
not know if this procedure leads to a unique fixpoint and whether fixpoints
are always reached. We therefore distinguish fixpoints from limits.

\begin{definition}[Strengthening Limit; Fixpoint]\label{def:limitProtocol}
Consider any strengthening $\heartsuit$.
The \emph{$\heartsuit$-limit} of a given protocol $P$ is the semantic protocol
$P^{\heartsuit\ast}$ defined as $\bigcap_{k \in \mathbb{N}} P^{\heartsuit{k}}$.
A given protocol $P$ is a \emph{fixpoint} of a strengthening $\heartsuit$
iff $P = P^\heartsuit$.
\end{definition}

\noindent
Note that limit protocols $P^{\heartsuit\ast}$ are \emph{not} in the logical
language, unlike their constituents $P^{\heartsuit k}$. We now define
$P^{\stephard \ast}$ as \emph{Hard Uniform Backward Induction}, and
$P^{\stepsoft \ast}$ as \emph{Soft Uniform Backward Induction}.
Again using induction on Theorem~\ref{thm:ubr-is-one-step}, it follows that
Uniform Backward Induction is the same as arbitrarily often iterated Uniform
Backward Defoliation.

\begin{corollary}\label{cor:ubi}
\[ P^{\stephard \ast} = P^{\mathsf{HUBD}\ast} \text{ and }
   P^{\stepsoft \ast} = P^{\mathsf{SUBD}\ast}. \]
\end{corollary}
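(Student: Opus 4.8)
The plan is to deduce the identity at the limit directly from the corresponding identities at every finite stage, which are already recorded in Corollary~\ref{cor:hardsoft}. Recall from Definition~\ref{def:limitProtocol} that the $\heartsuit$-limit is the semantic protocol $P^{\heartsuit\ast} := \bigcap_{k \in \mathbb{N}} P^{\heartsuit k}$, where the intersection of semantic protocols is understood pointwise: for each initial gossip graph $G$ we have $P^{\heartsuit\ast}(G) = \bigcap_{k \in \mathbb{N}} P^{\heartsuit k}(G)$. First I would invoke Corollary~\ref{cor:hardsoft}, which asserts that $P^{\stephard k} = P^{\mathsf{HUBD}k}$ and $P^{\stepsoft k} = P^{\mathsf{SUBD}k}$ as semantic protocols for every $k \in \mathbb{N}$. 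Since two semantic protocols are equal exactly when they agree on every $G$, this gives $P^{\stephard k}(G) = P^{\mathsf{HUBD}k}(G)$ for all $G$ and all $k$, and likewise for the soft versions.

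It then suffices to intersect these termwise-equal families. Fixing an arbitrary initial gossip graph $G$, I would compute
\[ P^{\stephard\ast}(G) = \bigcap_{k \in \mathbb{N}} P^{\stephard k}(G) = \bigcap_{k \in \mathbb{N}} P^{\mathsf{HUBD}k}(G) = P^{\mathsf{HUBD}\ast}(G), \]
where the outer two equalities are just Definition~\ref{def:limitProtocol} and the middle equality holds because the two families coincide term by term. As $G$ was arbitrary, this yields $P^{\stephard\ast} = P^{\mathsf{HUBD}\ast}$, and the identical argument with $\stepsoft$ and $\mathsf{SUBD}$ in place of $\stephard$ and $\mathsf{HUBD}$ gives $P^{\stepsoft\ast} = P^{\mathsf{SUBD}\ast}$. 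Alternatively, one could bypass Corollary~\ref{cor:hardsoft} and redo the induction on $k$ directly from Theorem~\ref{thm:ubr-is-one-step}, establishing the stagewise equalities and then passing to the intersection, but citing the corollary is the cleanest route.

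I expect no genuine obstacle here: the substantive content is entirely carried by Corollary~\ref{cor:hardsoft} (and, beneath it, Theorem~\ref{thm:ubr-is-one-step}), and the passage from stagewise equality to equality of the intersections is immediate. The only point worth stating explicitly is that equality of semantic protocols, and hence the intersection defining the limit, is evaluated pointwise on each $G$; this is what guarantees that termwise equality of the two sequences of protocols transfers to their intersections with no additional hypothesis, and in particular without needing to know that the sequences stabilize or that the limit lies in the logical language.
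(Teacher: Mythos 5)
Your proposal is correct and takes essentially the same route as the paper: the paper also obtains the corollary by iterating Theorem~\ref{thm:ubr-is-one-step} to get the stagewise identities of Corollary~\ref{cor:hardsoft} and then passing to the pointwise intersection $\bigcap_{k\in\mathbb{N}} P^{\heartsuit k}$ of Definition~\ref{def:limitProtocol}. Your explicit remark that equality of semantic protocols is evaluated per graph $G$, so termwise equality transfers to the intersections without any stabilization assumption, is exactly the (unstated) justification underlying the paper's one-line argument.
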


\begin{example}
Consider $P=\LNS$.
The number of $\LNS$ calls between $n$ agents is bounded by $\binom{n}{2}= n(n-1)/2$.
The limit $\LNS^{\heartsuit\ast}$ is therefore reached after a finite number of
iterations, and expressible in the gossip protocol language:
$\LNS^{\heartsuit n(n-1)/2} = \LNS^{\heartsuit\ast}$.
\end{example}

As a further observation, the look-ahead strengthenings are not always the limits
of one-step strengthenings. In other words, we do \emph{not} have for all $G$
that $P^{\stephard\ast}(G) = P^\hard(G)$ or that $P^{\stepsoft\ast}(G) = P^\soft(G)$.
Counterexamples are the ``N'' graph from Example~\ref{example:nExample} and the
extension of various strengthenings relating to the example in the upcoming
Section~\ref{subsec:diamond}, as shown in Table~\ref{table:DiamondExampleExtensions}
in the Appendix.

However, we know by the Knaster-Tarski theorem~\cite{Tarski1955:LatticeThm}
that on any gossip graph soft one-step strengthening $\stepsoft$ has
a unique greatest fixpoint, because $\stepsoft$ is monotone and the
lattice we are working in is the powerset of the set of all call sequences and
thereby complete.

\subsection{Detailed Example: the Diamond Gossip Graph}\label{subsec:diamond}

Consider the initial ``diamond'' gossip graph below.

\begin{center}
  \begin{tikzpicture}
    \node (0) at (0, 2) {0};
    \node (1) at (0,-2) {1};
    \node (2) at (-2,0) {2};
    \node (3) at (2,0) {3};
    \draw [->,dashed] (2) -- (0);
    \draw [->,dashed] (2) -- (1);
    \draw [->,dashed] (3) -- (0);
    \draw [->,dashed] (3) -- (1);
  \end{tikzpicture}
\end{center}

There are 92 different terminating sequences of $\LNS$ calls for this initial graph
of which 48 are successful and 44 are unsuccessful.
Also below we give an overview of all sequences.
For brevity we only list them in the compact way, up to the call after which
success has been decided.

\[ \begin{array}{ll}
  20;01 & \times \\
  20;21 & \times \\
  20;30;01 & \checkmark \\
  20;30;21 & \times \\
  20;30;31 & \checkmark \\
  20;31 & \checkmark \\
\end{array}
\hspace{2em}
\begin{array}{ll}
  21;10 & \times \\
  21;20 & \times \\
  21;30 & \checkmark \\
  21;31;10 & \checkmark \\
  21;31;20 & \times \\
  21;31;30 & \checkmark \\
\end{array}
\hspace{2em}
\begin{array}{ll}
  30;01 & \times \\
  30;20;01 & \checkmark \\
  30;20;21 & \checkmark \\
  30;20;31 & \times \\
  30;21 & \checkmark \\
  30;31 & \times \\
\end{array}
\hspace{2em}
\begin{array}{ll}
  31;10 & \times \\
  31;20 & \checkmark \\
  31;21;10 & \checkmark \\
  31;21;20 & \checkmark \\
  31;21;30 & \times \\
  31;30 & \times \\
\end{array} \]

Table~\ref{table:diamondStatistics} shows how many sequences are permitted
by the different strengthenings. Both soft strengthenings rule out no
successful sequences and rule out some unsuccessful sequences.
The hard look-ahead strengthening removes some successful sequences
and rules out the same number of unsuccessful sequences as the soft lookahead
strengthening, but interestingly enough this is a different set.

This demonstrates that Table~\ref{table:diamondStatistics} may be misleading:
the same number of sequences does not imply the same set of sequences. Table~\ref{table:DiamondExampleExtensions} in the Appendix is more detailed and lists sequences. If a further iteration of a strengthening does not change the number and also not the set of sequences, it has the same extension, and is therefore a fixpoint.
For example, Table~\ref{table:DiamondExampleExtensions} shows that $\LNS^{\stepsoft 2}$ and $\LNS^{\stepsoft 3}$ both have $48$ successful and $32$ unsuccessful sequences on the diamond graph.
They also have the same extension, hence $\LNS^{\stepsoft 2}$ is a fixpoint of $\stepsoft$ on this graph.

\begin{table}
  \centering
  \begin{tabular}{lrr}
    Protocol & \# successful & \# unsuccessful \\
    \toprule
    $\LNS$                &  48 &  44 \\
    $\LNS^\hard$          &   8 &   8 \\
    $\LNS^{\hard 2}$      &   0 &   4 \\
    $\LNS^{\hard 3}$      &   0 &   0 \\
    $\LNS^\soft$          &  48 &   8 \\
    $\LNS^{\soft 2}$      &  48 &   8 \\
    $\LNS^{\soft 3}$      &  48 &   8 \\
    $\LNS^\stephard$      &  24 &  36 \\
    $\LNS^{\stephard 2}$  &   8 &  16 \\
    $\LNS^{\stephard 3}$  &   8 &   4 \\
    $\LNS^{\stephard 4}$  &   0 &   4 \\
    $\LNS^{\stephard 5}$  &   0 &   0 \\
    $\LNS^\stepsoft$      &  48 &  36 \\
    $\LNS^{\stepsoft 2}$  &  48 &  32 \\
    $\LNS^{\stepsoft 3}$  &  48 &  32 \\
    ${(\LNS^\stepsoft)}^{\stephard 3}$        &  16 &  0 \\
    ${({(\LNS^\stepsoft)}^\stephard)}^\hard$  &  16 &  0 \\
  \end{tabular}
  \caption{Statistics for the diamond example.}\label{table:diamondStatistics}
\end{table}

Recall that one-step strengthening is uniform backward
defoliation (Theorem~\ref{thm:ubr-is-one-step}) and that the limit of one-step strengthening is uniform
backward induction (Corollary~\ref{cor:ubi}).
Table~\ref{table:diamondStatistics} shows the difference between
the look-ahead strengthenings and the one-step/defoliation strengthenings.
Although on this ``diamond'' graph, the hard strengthenings $\LNS^{\hard k}$
and $\LNS^{\stephard k}$ have the same fixpoint, namely the empty extension for all $k \geq 4$,
the soft strengthenings $\LNS^{\soft k}$ and $\LNS^{\stepsoft k}$ have different
fixpoints. Both are reached when $k=2$.

We now present two strengthenings that are strongly successful on this graph
(only successfully terminating call sequences remain).

Firstly, consider the protocol ${(\LNS^\stepsoft)}^{\stephard 3}$.
Its extension is as follows, see also Tables~\ref{table:diamondStatistics}
and~\ref{table:DiamondExampleExtensions}.
\[ \begin{array}{l}
20;30;01;31;21 \\
20;30;31;01;21 \\
20;31;10;30;21 \\
20;31;30;10;21 \\
\end{array}\hspace{1em}
\begin{array}{l}
21;30;01;31;20 \\
21;30;31;01;20 \\
21;31;10;30;20 \\
21;31;30;10;20 \\
\end{array}\hspace{1em}
\begin{array}{l}
30;20;01;21;31 \\
30;20;21;01;31 \\
30;21;10;20;31 \\
30;21;20;10;31 \\
\end{array}\hspace{1em}
\begin{array}{l}
31;20;01;21;30 \\
31;20;21;01;30 \\
31;21;10;20;30 \\
31;21;20;10;30 \\
\end{array} \]
Its extension has no sequences
with only four calls. There are sequences with redundant second-to-last calls, for
example $10$ in $20;31;30;10;21$.

Secondly, we present a protocol that is strongly successful on this graph and that has no redundant calls. Its description is far more involved than the previous protocol, but the effort seems worthwhile as is shows that: $(i)$ for some initial gossip graphs we can strengthen $\LNS$ up to finding strongly successful as well as optimal extensions; $(ii)$ the hard and soft strengthening procedures described so far merely touch the surface and are not all that goes around, because one can easily show that the following protocol does not correspond to any of those or their iterations.

We first describe it as a semantic protocol, liberally referring to call
histories in our description (which cannot be done in our logical language)
and only then give a formalization using the syntax of our protocol logic.
Consider the following semantic protocol:

\begin{quote}
(1) agent 2 or agent 3 makes a call to either 0 or 1.

(2) the agent among 2 and 3 that did not make a call in step (1) calls either 0 or 1.

(3) the agent $x$ that made the call in step (2) now makes a second call;
    if $x$ called agent 1 before then $x$ now calls 0 and vice versa.

(4) the agent $y$ that made the call in step (1) now makes a second call;
    if $y$ called agent 1 before then $y$ now calls 0 and vice versa.

(5) if the agent $z$ that was called in step (2) is not yet an expert,
    then $z$ calls the last remaining agent whose secret $z$ does not know.
\end{quote}
Now let us explain why this protocol is strongly successful on the ``diamond'' graph, and why it is a strengthening of $\LNS$. There are four possibilities for the first call: 2 may call 0, 2 may call 1, 3 may call 0 or 3 may call 1. These four cases are symmetrical, so let us assume that the first call is 20. The next call will then be made by agent 3, and there are two possibilities: either 3 also calls agent 0, or 3 calls agent 1. The call sequences, and the secrets known by the agents after each call has been made, are shown in the following two tables.

\[\begin{array}{cccccc}
\multicolumn{6}{c}{\text{First case: 2 and 3 call the same agent}}\\
\text{Stage} & \text{Call} & 0 & 1 & 2 & 3\\
(1) & 20 & \{0,2\} & \{1\} & \{0,2\} & \{3\}\\
(2) & 30 & \{0,2,3\} & \{1\} & \{0,2\} & \{0,2,3\}\\
(3) & 31 & \{0,2,3\} & \{0,1,2,3\} & \{0,2\} & \{0,1,2,3\}\\
(4) & 21 & \{0,2,3\} & \{0,1,2,3\} & \{0,1,2,3\} & \{0,1,2,3\}\\
(5) & 01 & \{0,1,2,3\} & \{0,1,2,3\} & \{0,1,2,3\} & \{0,1,2,3\}
\end{array}\]
\[\begin{array}{cccccc}
\multicolumn{6}{c}{\text{Second case: 2 and 3 call different agents}}\\
\text{Stage} & \text{Call} & 0 & 1 & 2 & 3\\
(1) & 20 & \{0,2\} & \{1\} & \{0,2\} & \{3\}\\
(2) & 31 & \{0,2\} & \{1,3\} & \{0,2\} & \{1,3\}\\
(3) & 30 & \{0,1,2,3\} & \{1,3\} & \{0,2\} & \{0,1,2,3\}\\
(4) & 21 & \{0,1,2,3\} & \{0,1,2,3\} & \{0,1,2,3\} & \{0,1,2,3\}
\end{array}\]
Note that all of these calls are possible, in the sense that all callers know the number of the agent they are calling.
Agents 2 and 3 start out knowing the numbers of 0 and 1, so the calls 20, 30, 21 and 31 are possible from the start.
Furthermore, agent 0 learns the number of agent 1 from agent 2 in the first call, so after the call 20 the call 01 is also possible.

In the second case there is no fifth call, since the agent that received the call in step (2) is already an expert after step (4).
As a result, there are no redundant calls in either possible call sequence.
Furthermore, in either case, all agents become experts.
Finally, every call is to an agent whose secret is unknown to the caller before the call.
So, the described protocol is a strongly successful strengthening of $\LNS$.

The two call sequences shown above are possible if the first call is $20$.
There are six other call sequences corresponding to the other three options for the first call.
Overall, the protocol allows the following 8 sequences.
\[ \begin{array}{l}
20;30;31;21;01 \\
20;31;30;21 \\
\end{array}\hspace{1em}
\begin{array}{l}
21;31;30;20;10 \\
21;30;31;20 \\
\end{array}\hspace{1em}
\begin{array}{l}
30;20;21;31;01 \\
30;21;20;31 \\
\end{array}\hspace{1em}
\begin{array}{l}
31;21;20;30;10 \\
31;20;21;30 \\
\end{array}\]
We can also define a syntactic protocol that has the above semantic protocol as its extension.
This syntactic protocol is not particularly elegant, but it illustrates how the logical language can be used to express more complex conditions.
The call condition $P_{ij}$ of this syntactic protocol is of the form $P_{ij}=K_i\psi_{ij}$ (where $K_i$ abbreviates $K_i^{ANY}$, as defined in Section~\ref{subsec:language}).
This guarantees that the protocol is epistemic, because Lemma~\ref{lemma:str} implies that $K_i\psi_{ij}\rightarrow K_i^P K_i\psi_{ij}$ is valid.
The formula $\psi_{ij}$ is a disjunction with the following five disjuncts, one for each of the clauses (1) -- (5) of the protocol as described above.

The formula $\phi_0 := \bigwedge_k\bigwedge_{l\not = k} \neg S_kl$ holds if and only if no calls have taken place yet. Since agents 2 and 3 are the only ones that know the number of another agent, if $\phi_0$ is true then any agent who can make a call is allowed to make that call. So $\phi_0$ is the first disjunct of $\psi_{ij}$, enabling the call in stage (1).

Defining ``exactly one call has been made'' is a bit harder, but we can do it: after the first call, there will be two agents that know two secrets, while everyone else only knows one secret.
So $\phi_1 := \bigvee_{k\not = l}(S_kl \wedge S_lk\wedge \bigwedge_{m\not \in \{k,l\}}\bigwedge_{n\not = m}\neg S_mn)$ holds if and only if exactly one call has been made.
In that case, any agent that is capable of making calls and only knows their own secret is allowed to make a call, so $\phi_1\wedge \bigwedge_{k\not = i}\neg S_ik$ is the second disjunct of $\psi_{ij}$, enabling the call in stage (2).

In stage (3), the second caller is supposed to make another call.
We make a case distinction based on whether the first two calls were to the same agent or to different agents.
If they were to the same agent, then the second caller now knows three different secrets: $\bigvee_{k\not = i}\bigvee_{l\not \in \{i,k\}}S_i{kl}$.
But that holds not only for the agent who made the second call, but also for the agent that received the second call.
The difference between them is that the secret of the receiver of this call is now known by three agents, while the secret of the caller is known by only two: $\bigwedge_{k\not = i}(S_ki\rightarrow \bigwedge_{l\not \in \{i,k\}}\neg S_li)$.

If the first two calls were to different agents, the second caller knows that every agent now knows exactly two secrets: $K_i\bigwedge_k\bigvee_{l\not = k}(S_kl\wedge \bigwedge_{m\not \in \{k,l\}}\neg S_km)$.
This holds for the receiver of the second call as well, but the difference between them is that the number of the receiver is known to an agent who does not know their secret, while the number of the caller is not: $\bigwedge_{k}(N_ki\rightarrow S_ki)$.

In either case, the target of the call should be the unique agent whose number the caller knows but whose secret the caller does not know.
Since calls are always to an agent whose number is known, we only have to stipulate that the target's secret is not known.
So the third disjunct of $\psi_{ij}$ is
\begin{align*} \neg S_ij\wedge  (&(\bigvee_{k\not = i}\bigvee_{l\not \in \{i,k\}}S_i{kl} \wedge \bigwedge_{k\not = i}(S_ki\rightarrow \bigwedge_{l\not \in \{i,k\}}\neg S_li)) \vee \\
& (K_i\bigwedge_k\bigvee_{l\not = k}(S_kl\wedge \bigwedge_{m\not \in \{k,l\}}\neg S_km)\wedge \bigwedge_{k}(N_ki\rightarrow S_ki))),\end{align*}
enabling the call in stage (3).

It is relatively easy to express when the call in stage (4) should happen: before the third call, all agents know that there is no expert yet, while after the third call all agents consider it possible that there is at least one expert.
This can be expressed as $\hat{K}_i\bigvee_{k}\Exp_k$.
It is slightly more difficult to identify the agent who should make the call.
The agent who should make the call, the one who made the call in stage (1), is the only agent who only knows two secrets, and whose number is only known by agents that also know their secret.
So $\neg\bigvee_{k\not = i}\bigvee_{l\not \in \{i,j\}}S_ikl \wedge \bigwedge_{k}(N_ki\rightarrow S_ki)$.
Finally, the person who should be called in this stage is the unique agent of whom the caller knows the number but not the secret.
The fourth disjunct is therefore $\neg S_ij\wedge \hat{K}_i\bigvee_k\Exp_k \wedge \neg\bigvee_{k\not = i}\bigvee_{l\not \in \{i,j\}}S_ikl \wedge \bigwedge_{k}(N_ki\rightarrow S_ki)$.

Finally, the call in stage (5) should only happen if there remains a non-expert agent.
This non-expert considers it possible that all other agents are experts, so the final disjunct of $\psi_{ij}$ is $\neg S_ij \wedge \hat{K}_i\bigwedge_{k\not = i}\Exp_k$.

On the ``diamond'' graph the extension of the syntactic protocol with call condition $P_{ij}$ is the semantic protocol defined above.
Clearly, this protocol is symmetric.
We already showed that the protocol is epistemic as well.

All in all, this gives us the protocol that we were looking for.
Manually verifying the extension of the protocol is somewhat tedious, so we have also checked the extension using the model checking tool described in the Appendix.

\section{An Impossibility Result on Strengthening LNS}\label{sec:imposs}

\subsection{An Impossibility Result}

In this section we will show that there are graphs where
(i) $\LNS$ is weakly successful and
(ii) no epistemic symmetric strengthening of $\LNS$ is strongly successful.
Recall that we assume that the system is synchronous and that the initial
gossip graph is common knowledge.
Without such assumptions it is even easier to obtain such an impossibility
result, a matter that we will address in the final section.

\begin{theorem}\label{thm:StrongImposs}
There is no epistemic symmetric protocol that is a strongly successful
strengthening of $\LNS$ on all graphs.
\end{theorem}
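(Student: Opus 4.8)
The plan is to refute the negation of the theorem by exhibiting a single fixed initial gossip graph $G$ on six agents that already defeats every candidate: it suffices to find one $G$ on which no epistemic symmetric strengthening of $\LNS$ is strongly successful. I would choose $G$ to carry a nontrivial automorphism $J$ (a symmetry swapping two mirror-image ``halves'' of the network), since the restriction to \emph{symmetric} protocols is exactly what prevents an agent from breaking the resulting ties by appealing to agent names. Before the contradiction argument I would first discharge the easy half, namely that $\LNS$ is weakly successful on $G$ (part (i) of the informal statement preceding the theorem), simply by writing down one $\LNS$-permitted execution after which all six agents are experts; this guarantees the graph is not trivially hopeless and isolates the difficulty in the \emph{strong} success requirement.

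Suppose then, toward a contradiction, that $P$ is an epistemic symmetric strengthening of $\LNS$ that is strongly successful on all graphs, hence in particular on $G$. Two structural facts drive everything. First, since $P\subseteq\LNS$ and $\LNS$-sequences are finite, every $P$-permitted prefix extends to a terminal $P$-permitted sequence, and strong success (Definition~\ref{def-success}) then forces \emph{every} terminal $P$-permitted sequence on $G$ to be successful; in particular $P$ cannot be the do-nothing protocol, because $(G,\epsilon)$ is an unsuccessful terminal state. Second, by the epistemic clause of Definition~\ref{def:SemSymEpis}, permission is \emph{uniform across indistinguishability}: a call $ab$ is $P$-permitted at $(G,\sigma)$ if and only if it is $P$-permitted at every $(G,\tau)$ with $(G,\tau)\sim_a^P(G,\sigma)$. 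Symmetry adds that the set of $P$-permitted sequences is closed under $J$, so the protocol must behave on $J(\sigma)$ exactly as the mirror image of how it behaves on $\sigma$.

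The heart of the proof is to locate, inside the $\LNS$-execution tree on $G$, a pair of histories $\sigma_1$ and $\sigma_2=J(\sigma_1)$ together with an agent $a$ poised to move such that $(G,\sigma_1)\sim_a^P(G,\sigma_2)$, yet the two demand \emph{opposite} behaviour for success: there is a critical call $c=ab$ for which every successful continuation of $\sigma_1$ must begin with $c$, while through $\sigma_2$ the call $c$ leads only to unsuccessful terminal states (every $\LNS$-terminal extension of $\sigma_2;c$ leaves some agent ignorant). Given such a configuration the contradiction is forced by a two-case split on whether $P$ permits $c$ at $(G,\sigma_1)$ (equivalently, by uniformity, at $(G,\sigma_2)$). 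If $P$ permits $c$, then at $\sigma_2$ the state $\sigma_2;c$ extends, since $P\subseteq\LNS$, only to unsuccessful terminal $P$-sequences, contradicting strong success. If $P$ forbids $c$, then at $\sigma_1$ the unique route to success is severed and $\sigma_1$ is doomed to an unsuccessful terminal extension, again contradicting strong success. Symmetry is precisely what guarantees that these mirror histories cannot be separated by a name-based tie-break, and the epistemic condition is precisely what forbids separating them by their (indistinguishable) call histories; the graph is engineered so that these two demands are jointly irreconcilable with strong success.

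The main obstacle --- and the step that genuinely requires care rather than bookkeeping --- is the \emph{design and verification} of $G$: one must choose the six-agent network and the automorphism $J$ so that such an unavoidable indistinguishable dilemma provably exists and cannot be sidestepped by \emph{any} strengthening, not merely by the specific ones (look-ahead, one-step, \textsf{UBD}) studied earlier. Establishing the ``unique route to success'' property at $\sigma_1$ and the ``$c$ is fatal'' property at $\sigma_2$, and checking that the two histories really are $\sim_a^P$-indistinguishable, amounts to a case analysis over the large number of $\LNS$-call sequences on $G$; a mild but important subtlety is that $\sim_a^P$ depends on $P$ itself, which I would handle by working only with histories that remain $P$-permitted, so that $\sim_a^P$ refines but agrees with $\sim_a^{\LNS}$ on the states in play (using Lemma~\ref{lemma:str}). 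Because this verification ranges over exponentially many sequences, it is natural --- and, following the paper, intended --- to discharge the exhaustive checks with the Haskell implementation described in the Appendix.
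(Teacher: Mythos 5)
Your skeleton does match the paper's actual proof: the paper exhibits a six-agent ``candy'' graph with a nontrivial automorphism group, and its decisive move is exactly your transfer step --- a call that is fatal after one history must, by the epistemic condition of Definition~\ref{def:SemSymEpis}, also be forbidden after an indistinguishable mirror history. But two parts of your schema would fail as stated. First, the forbid-branch of your dilemma is unsound without a reachability argument: if $P$ forbids the critical call $c$ at $\sigma_1$, this contradicts strong success only when $\sigma_1 \in P(G)$, and nothing in your outline prevents $P$ from pruning the branch before $\sigma_1$ ever arises. (Symmetry does give you $\sigma_2 = J(\sigma_1) \in P(G)$ whenever $\sigma_1 \in P(G)$, and you genuinely need this, since $\sim_a^P$ only relates $P$-permitted states --- an agent with common knowledge of $P$ can distinguish $\sigma_1$ from any history $P$ never produces, so your indistinguishability premise would silently evaporate.) The paper closes this hole by placing the dilemma at the very root: a pure-number invariant shows agents $2$ and $3$ are the only possible call targets in any $\LNS$ sequence; the first calls $12$ and $43$ are fatal, hence forbidden by any strongly successful $P$; and the remaining candidates $02, 03, 52, 53$ form a single orbit of the automorphism group, so $P$ permits all of them or none --- and ``none'' makes $(G,\epsilon)$ an unsuccessful terminal state.

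Second, your condition that ``every successful continuation of $\sigma_1$ must begin with $c$'' is not realizable at the paper's critical node and would steer the graph design wrong. After the (wlog) first call $02$, more than one call still heads a successful $\LNS$ continuation (e.g., $02;12;53;43;13;03;23;52;42$ succeeds), so no unique critical call exists; instead the paper blocks \emph{all six} possible second calls, four of them ($43$, $03$, $23$, $52$) fatal outright and hence excluded by strong success alone, and the remaining two killed by two \emph{separate} dilemmas involving different agents: agent $1$ cannot distinguish the first calls $02$ and $03$ and $03;12$ is fatal, so $02;12$ is forbidden; agent $5$ cannot distinguish them either and $03;53$ is fatal (a permutation of the analysis of $02;52$), so $02;53$ is forbidden. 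The configuration to engineer is thus a forced node at which every success-compatible move is excluded, each possibly via a different agent's mirror image --- a weaker and realizable demand. Finally, the verification is not the exponential sweep you anticipate: the invariant that after $02;23$ (and similar positions) nobody can ever learn a new pure number collapses each case to a few lines by hand; the Haskell checker is only used in Section~\ref{sec:imposs} for auxiliary statistics about $\LNS^\hard$ and $\LNS^\soft$ on this graph, not to certify the impossibility itself.
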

\begin{proof}
Consider the following ``candy'' graph $G$:
\begin{center}
\begin{tikzpicture}[scale=0.6]
\node (0) at (4,4) {0};
\node (1) at (-1,2) {1};
\node (2) at (2,2) {2};
\node (3) at (6,2) {3};
\node (4) at (9,2) {4};
\node (5) at (4,0) {5};

\draw[->,dashed] (0) -- (2);
\draw[->,dashed] (0) -- (3);
\draw[->,dashed] (1) -- (2);
\draw[->,dashed] (5) -- (2);
\draw[->,dashed] (5) -- (3);
\draw[->,dashed] (4) -- (3);
\end{tikzpicture}
\end{center}
$\LNS$ is weakly successful on $G$, but there is no epistemic symmetric protocol
$P$ that is a strengthening of $\LNS$ and that is strongly successful on $G$.

In~\cite{DEPRS2015:DynamicGossip}, it was shown that $\LNS$ is weakly successful on any
graph that is neither a ``bush'' nor a ``double bush''. Since this graph
$G$ is neither a bush nor a double bush, $\LNS$ is weakly successful on
it.
For example, the sequence
\begin{equation*}02;12;53;43;13;03;23;52;42\end{equation*}
is a successful $\LNS$ sequence which makes everyone an expert.
LNS is not strongly successful on this graph, however. For example,
\begin{equation*}02;12;53;43;13;03;52;42\end{equation*}
is an unsuccessful $\LNS$ sequence, because $5$ learns neither the number
nor the secret of $4$ and no further calls are allowed.

Now, suppose towards a contradiction that $P$ is an epistemic symmetric
strengthening of $\LNS$, and that $P$ is strongly successful on $G$.

Before we look at specific calls made by $P$, we consider a general fact.
Recall that knowing a \emph{pure number} means knowing the number of an
agent without knowing their secret. For any gossip graph and
any agent $a$, if no one has $a$'s pure number, then no call sequence will
result in anyone learning $a$'s pure number. After all, in order to learn $a$'s
number, one would have to call or be called by someone who already knows that
number, but in such a call one would also learn $a$'s secret.

In $\LNS$, you are only allowed to call an agent if you have the number but not the
secret of that agent, i.e., if you have their pure number. It follows that if,
in a given gossip graph, no one has $a$'s pure number, then no $\LNS$ sequence on
that graph will contain any calls where $a$ is the receiver.

In the gossip graph $G$ under consideration, agents 0, 1, 4 and 5 are
in the situation that no one else knows their number. So in particular, no one
knows the pure number of any of these agents. It follows that 2 and 3 are the
only possible targets for $\LNS$ calls in this graph.

Now, let us consider the first call according to $P$. This call must target
$2$ or $3$. The calls $12$ and $43$ are bad calls, since they would result in
1 (resp.~4) being unable to make calls or be called, while still not being an
expert.

This means that either 0 or 5 must make the first call. By symmetry, we can
assume without loss of generality that the first call is $02$. This yields the
following situation.

\begin{center}
  \begin{tikzpicture}[scale=0.6]
    \node (0) at (4,4) {0};
    \node (1) at (-1,2) {1};
    \node (2) at (2,2) {2};
    \node (3) at (6,2) {3};
    \node (4) at (9,2) {4};
    \node (5) at (4,0) {5};
    \draw[<->] (0) -- (2);
    \draw[->,dashed] (2) -- (3);
    \draw[->,dashed] (0) -- (3);
    \draw[->,dashed] (1) -- (2);
    \draw[->,dashed] (5) -- (2);
    \draw[->,dashed] (5) -- (3);
    \draw[->,dashed] (4) -- (3);
  \end{tikzpicture}
\end{center}

Now, let us look at the next call.

\begin{itemize}
  \item The sequence $02;43$ is bad, because that would make it impossible
  for 4 to ever become an expert.

  \item Because of the symmetry of $P$, the initial call could have been $03$
  instead of $02$. The sequence $03;12$ is bad, since 1 cannot become an
  expert, so $03;12$ is not allowed by the strongly successful protocol $P$.

  But agent 1 cannot tell the difference between $03$ and $02$, so from the fact
  that $03;12$ is disallowed and that $P$ is epistemic it follows that $02;12$
  is also disallowed.

  \item The sequence $02;03$ is bad, since $0$ will not be able to make any
  call afterwards. Because $0$ can also never be called, this implies that $0$
  will never become an expert.

  \item Consider then the sequence $02;23$. This results in the following diagram.

  \begin{center}
    \begin{tikzpicture}[scale=0.6]
      \node (0) at (4,4) {0};
      \node (1) at (-1,2) {1};
      \node (2) at (2,2) {2};
      \node (3) at (6,2) {3};
      \node (4) at (9,2) {4};
      \node (5) at (4,0) {5};
      \draw[<->] (0) -- (2);
      \draw[<->] (2) -- (3);
      \draw[->] (3) to[bend right] (0);
      \draw[->,dashed] (0) -- (3);
      \draw[->,dashed] (1) -- (2);
      \draw[->,dashed] (5) -- (2);
      \draw[->,dashed] (5) -- (3);
      \draw[->,dashed] (4) -- (3);
    \end{tikzpicture}
  \end{center}

This graph has the following property: it is impossible (in any $\LNS$
sequence) for any agent to get to learn a new pure number. That is, nobody
can learn a new number without also getting to know the secret of that agent:
agents 1, 0, and 4 each know only one pure number, so they cannot teach anyone a
new number, and agent 5 knows two pure numbers (2 and 3), but those agents
already know each other's secrets.

As a result, any call that will become allowed by $\LNS$ in the future is already
allowed now. There are 5 such calls that are currently allowed, namely 12,
52, 53, 03 and 43. Furthermore, of those calls 52 and 53 are mutually exclusive,
since calling 2 will teach 5 the secret of 3, and calling 3 will teach 5 the
secret of 2.

So any continuation of $02;23$ allowed by $\LNS$ can only contain (in any order)
12, 03, 43 and either 52 or 53. Since $P$ is a strengthening of $\LNS$, the same
holds for $P$. But using only those calls, there is no way to teach 3 the secret
of 1: secret 1 can reach agent 2 using the call 12, but in order for the secret
to travel any further we need the call 52. After that call only 03 and 43 are
still allowed (in particular, 53 is ruled out), so the knowledge of secret 1
remains limited to agents 1, 2 and 5.

Since 02;13 cannot be extended to a successful $\LNS$ sequence, 02;13 must be disallowed.

\item Consider the call sequence $02;52$. This gives the following diagram.
  \begin{center}
    \begin{tikzpicture}[scale=0.6]
      \node (0) at (4,4) {0};
      \node (1) at (-1,2) {1};
      \node (2) at (2,2) {2};
      \node (3) at (6,2) {3};
      \node (4) at (9,2) {4};
      \node (5) at (4,0) {5};
      \draw[<->] (0) -- (2);
      \draw[->,dashed] (2) -- (3);
      \draw[->,dashed] (0) -- (3);
      \draw[->,dashed] (1) -- (2);
      \draw[<->] (5) -- (2);
      \draw[->] (5) -- (0);
      \draw[->,dashed] (5) -- (3);
      \draw[->,dashed] (4) -- (3);
    \end{tikzpicture}
  \end{center}
  Note that in this situation, it is impossible for agents 3 and 4 to learn any
  new number without also learning the secrets corresponding to those numbers:
  there is no agent that knows the number of agent 3 and that also knows another
  pure number, and this will remain the case whatever other calls happen.

  This means that agent 3 cannot make any calls, and that agent 4 can make
  exactly one call, to agent 3.

  Suppose now that $02;52$ is extended to a successful $\LNS$ sequence. This
  sequence has to contain the call 43 at some point. This will be the only call
  by agent 4, so in order for the sequence to be successful, agent 3 already
  has to know secret 1 by the time 43 takes place.

  In particular, this means that the call 12 has already happened, and that
  either agent 1 or agent 2 has then called agent 3 to transmit this secret.
  Whichever agent among 1 and 2 makes this call, afterwards they are unable to
  make any more calls. Furthermore, this takes place before the call 43, so
  whatever agent $x \in \{ 1,2 \}$ informs 3 of secret 1 does not learn secret
  4. Since this agent $x$ can neither make another call nor be called, it
  follows that $x$ does not become an expert.

  So $02;52$ is not allowed by $P$ which we assumed to be strongly successful.

\item Finally, consider the call sequence $02;53$. By symmetry, 03 could have
  been the first call as opposed to 02. Furthermore, the same reasoning that
  showed 02;52 to be unsuccessful above can, with an appropriate permutation of
  agents, be used to show that 03;53 is unsuccessful.

  Agent 5 cannot distinguish between the first call 02 and 03 before making the
  call $53$, so if $03;53$ is disallowed then so is $02;53$ because $P$ is
  epistemic.
\end{itemize}

Remember that $02$ is, without loss of generality, the only initial call that
can lead to success.
We have shown that all of the $\LNS$-permitted calls following the initial call
02 (namely, the calls 43, 12, 03, 23, 52 and 53) are disallowed by $P$.
This contradicts $P$ being a strongly successful strengthening of $\LNS$.
\end{proof}

\subsection{Backward Induction and Look-Ahead applied to Candy}

Given this impossibility result, it is natural to wonder what would happen if we
use the syntactic strengthenings from Definition~\ref{def:strengthening}, or
their iterations, on the ``candy'' graph $G$.

All second calls are eliminated by $\LNS^\hard$, because for any two agents $a$
and $b$ we have $G, 02 \models \lnot K^\LNS_a [ab] \langle \LNS \rangle \Exp$.
By symmetry this also holds for the three other possible first calls,
hence $\LNS^\hard$ is unsuccessful on $G$.
However, the first calls \emph{are} still allowed according to $\LNS^\hard$.

There are 9468 $\LNS$-sequences on this graph of which 840 are successful.
Using the implementation discussed in the Appendix we found out that
$\LNS^\soft$, the soft look-ahead strengthening of $\LNS$, is weakly successful
on this graph and allows 840 successful and 112 unsuccessful sequences.

\section{Conclusions, Comparison, and Further Research}\label{sec:generalizations}\label{sec:conclusion}

\paragraph*{Conclusions}
We modeled common knowledge of protocols in the setting of distributed dynamic
gossip. A crucial role is played by the novel notion of protocol-dependent
knowledge. This knowledge is interpreted using an epistemic relation over
states in the execution tree of a gossip protocol in a given gossip graph. As
the execution tree consists of gossip states resulting from calls permitted by
the protocol, this requires a careful semantic framework.

We described various syntactically or semantically definable strengthenings of
gossip protocols, and investigated the combination and iteration of such
strengthenings, in view of strengthening a weakly successful protocol into one
that is strongly successful on all graphs.
In the setting of gossip, a novel notion we used in such strengthenings is that
of uniform backward induction, as a variation on backward induction in search
trees and game trees.

Finally, we proved that for the $\LNS$ protocol, in which agents are only allowed
to call other agents if they do not know their secrets, it is impossible to
define a strengthening that is strongly successful on all graphs.

\paragraph*{Comparison}
As already described at length in the introductory section, our work builds
upon prior work on dynamic distributed
gossip~\cite{DEPRS2015:DynamicGossip,DEPRS2017:EpistemicGossip},
which itself has a prior history in the networks
community~\cite{BLL1999:DiscovDistrib,KSSV2000:RandomRumor,Haeupler2015:SimpleSpread}
and in the logic community~\cite{ADGH2014:KnowledgeGossip,AptGroHoe2015:EpisDistGos}.
Many aspects of gossip may or may not be common knowledge among agents: how many
agents there are, the time of a global clock, the gossip graph, etc. The point
of our result is that even under the strongest such assumptions, one can still
not guarantee that a gossip protocol always terminates successfully.
How common knowledge of agents is affected by gossip protocol execution is
investigated in~\cite{AptWoj2017:GossipCK}: for example, the authors demonstrate
how sender-receiver subgroup common knowledge is obtained (and lost) during calls.
However, they do not study common knowledge of gossip protocols.
We do not know of other work on that topic. Outside the area of
gossip, protocol knowledge has been well investigated in the epistemic logic
community~\cite{hoshi:phd,Wang10:phd,hvdetal.aij:2014}.

While the concept of backward induction is well-known in game theory
(see for example \cite{Aumann1995:BIandCKR}), it is only
used in perfect-information settings, where all agents know what the real world
or the actual state is. Our definition of \emph{uniform} backward induction is a
generalization of backward induction to the dynamic gossip setting, where only
partial observability is assumed. A concept akin to uniform backward induction
has been proposed in~\cite{Perea2014:BelFutRat}
(rooted in~\cite{BattiSini2002:StrongBelFwIR}), under the name of
\emph{common belief in future rationality}, with an accompanying recursive
elimination procedure called \emph{backward dominance}.\footnote{We kindly thank
Andr\'es Perea for his interactions.} As in our approach, this models a decision
rule faced with uncertainty over indistinguishable moves.
In~\cite{Perea2014:BelFutRat}, the players are utility maximizers with
probabilistic beliefs, which in our setting would correspond to
\emph{randomizing} over all indistinguishable moves/calls. As a decision rule
this is also known as the \emph{insufficient reason} (or \emph{Laplace})
criterion: all outcomes are considered equiprobable. Seeing uniform backward
induction as the combination of backward induction and a decision rule
immediately clarifies the picture.
Soft uniform backward induction applies the \emph{minimax regret} criterion for
the decision whom to call, minimizing the maximum utility loss. In contrast,
hard uniform backward induction applies the \emph{maximin utility} criterion,
maximizing the minimum utility (also known as risk-averse, pessimistic, or Wald
criterion).

In the gossip scenario, the unique minimum value is unsuccessful termination,
and the unique maximum value is successful termination.
Minimax prescribes that as long as the agent considers it possible that a call
leads to successful termination, the agent is allowed to make the call (as long
as the minimum of the maximum is success, go for it): the soft version.
Maximin prescribes that, as long as the agent considers it possible that a call
lead to unsuccessful termination, the agent should not make the call (as long as
the maximum of the minimum is failure, avoid it): the hard version.
Such decision criteria over uncertainty also crop up in areas overlapping with
social software and social choice, e.g.~\cite{BalSmeZve2009:KeepHoping,CoWaXi2011:ManVot,parikhetal:2013,Meir2015:PluVotUnc}.
In~\cite{BalSmeZve2009:KeepHoping} a somewhat similar concept has been called
``common knowledge of stable belief in rationality''. However, there it applies
to a weaker epistemic notion, namely belief.

\paragraph*{Further Research}
The impossibility result for $\LNS$ is for dynamic gossip where agents
exchange both secrets and numbers, and where the network expands.
Also in the non-dynamic setting we can quite easily find a graph where static
$\LNS$ is weakly successful but cannot be strengthened to an epistemic symmetric
strongly successful protocol.
Consider again the ``diamond'' graph of Section~\ref{subsec:diamond}, for which
we described various strongly successful strengthenings.
Also in ``static'' gossip $\LNS$ is weakly successful on this graph, since $01;30;20;31$ is successful.
All four possible first calls are symmetric.
After $21$, the remaining possible calls are $20$, $31$ and $30$.
But $20$ is bad, since 2 will never learn secret 3 that way.
Also $31$ is bad, since agent 1 will never learn the secret of 0.
The call $30$ is safe and in fact guarantees success, but by epistemic symmetry it cannot be allowed while $31$ is disallowed.
Therefore, in the static setting it is impossible to strengthen $\LNS$ on ``diamond'' such that it becomes strongly successful.
We expect a completely different picture for strengthening ``static'' gossip protocols in similar fashion as we did here,
for dynamic gossip.

We assumed synchronicity (a global clock) and common knowledge of the initial
gossip graph. These strong assumptions were made on purpose, because without
them agents will have even less information available and will therefore not be
able to coordinate any better. Such and other parameters for gossip problems are
discussed in~\cite{DGHHK2016:GossipParameters}. It is unclear what results still
can be obtained under fully distributed conditions, where agents only know their
own history of calls and their neighbors.

We wish to determine the logic of protocol-dependent knowledge $K^P_a$, and also
on fully distributed gossip protocols, without a global clock, and to further
generalize this beyond the setting of gossip.

\clearpage

\section*{Appendix: A Model Checker for Dynamic Gossip}
\addcontentsline{toc}{section}{Appendix: A Model Checker for Dynamic Gossip}

Analyzing examples of gossip graphs and their execution trees by hand is
tedious. To help us find and check the examples in this paper we wrote a
Haskell program which is available at \url{https://github.com/m4lvin/gossip}.
Our program can show and randomly generate gossip graphs, execute the protocols
we discussed and draw the resulting execution trees with epistemic edges.
The program also includes an epistemic model checker for the formal language
we introduced, similar to DEMO~\cite{JvE2007:DEMO}, but tailor-made for dynamic
gossip. For further details, see also \cite[Section~6.6]{GattingerThesis2018}.

Figure~\ref{figure:nExampleTreePart} is an example output of the implementation,
showing the execution tree for Example~\ref{example:nExample} up to two calls,
together with the epistemic edges for agent $2$, here called $c$. Note that
we use a more compact way to denote gossip graphs: lower case stands for a pure
number and capital letters for knowing the number and secret.

\begin{figure}[H]
  \centering
  \includegraphics[width=0.97\linewidth]{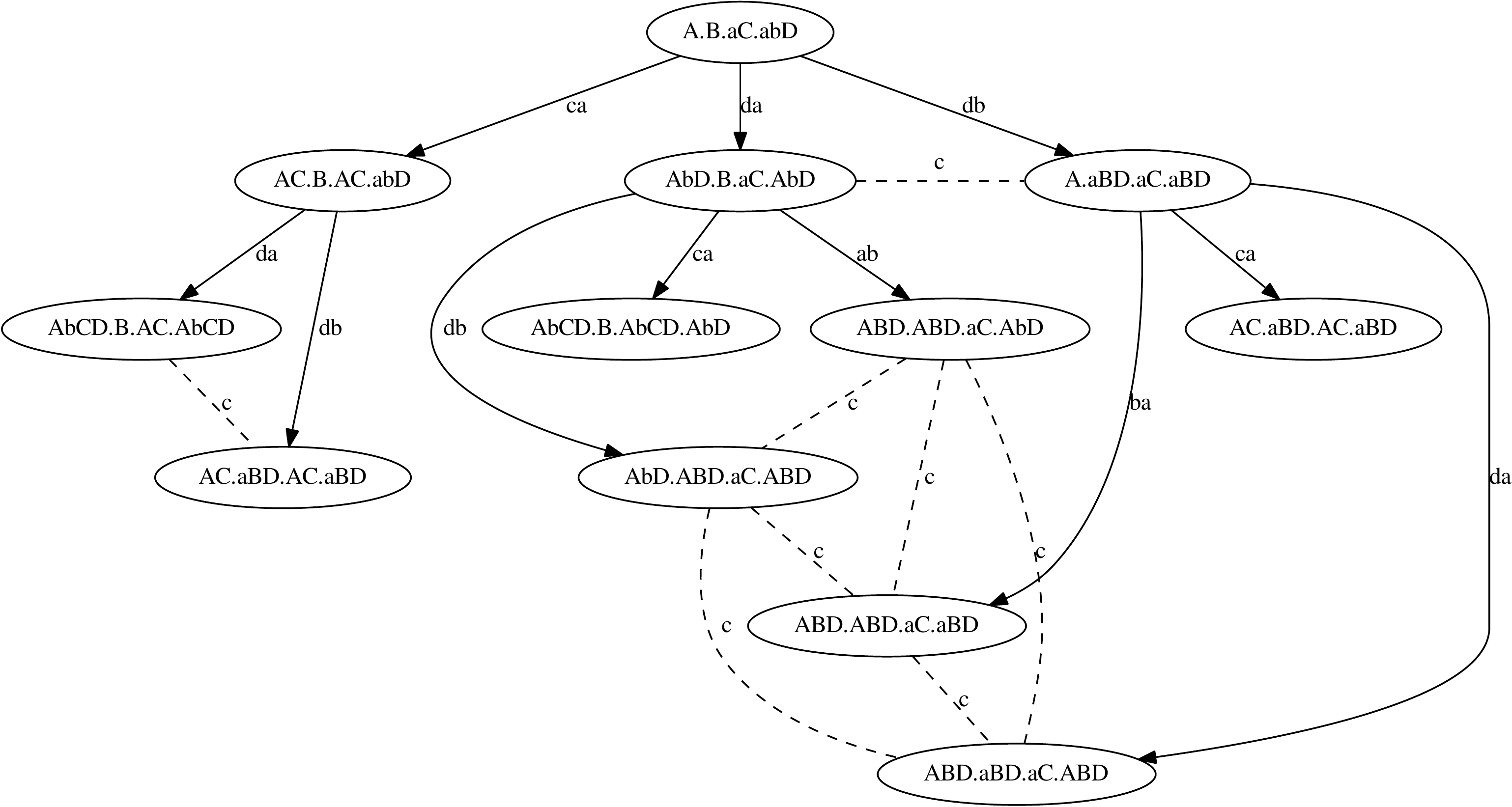}
  \caption{Two levels of the execution tree for Example~\ref{example:nExample},
    with epistemic edges for $c$.}\label{figure:nExampleTreePart}
\end{figure}

Our implementation can run different protocols on a given graph and output
a \LaTeX\ table showing and comparing the extension of those protocols.
Tables~\ref{table:nExampleExtensions} and~\ref{table:DiamondExampleExtensions}
have been generated in this way.
They provide details how various strengthenings behave on the gossip graphs
from Example~\ref{example:nExample} and Section~\ref{subsec:diamond}.

\begin{table}
  \centering
  \fontsize{8pt}{9.5pt}\selectfont
  \begin{tabular}{llllllllllllll}
  & $LNS$
  & $\cdot^\hard $
  & $\cdot^\soft$
  & $\cdot^\stephard$
  & $\cdot^{\stephard 2}$
  & $\cdot^{\stephard 3}$
  & $\cdot^{\stephard 4}$
  & $\cdot^\stepsoft$
  & $\cdot^{\stepsoft 2}$
  & $\cdot^{\stepsoft 3}$
  & $\cdot^{\stepsoft 4}$
  & $\cdot^{\stepsoft 5}$\\
$\epsilon$ &    &    &    &    &    &    & $\times$ &    &    &    &    &    \\
20 &    &    &    &    &    & $\times$ &    &    &    & $\times$ &    &    \\
20;30 &    &    &    &    & $\times$ &    &    &    & $\times$ &    &    &    \\
20;30;01 &    &    &    & $\times$ &    &    &    & $\times$ &    &    &    &    \\
20;30;01;31 & $\times$ &    &    &    &    &    &    &    &    &    &    &    \\
20;30;31 &    &    &    & $\times$ &    &    &    & $\times$ &    &    &    &    \\
20;30;31;01 & $\times$ &    &    &    &    &    &    &    &    &    &    &    \\
20;31 &    &    &    &    & $\times$ &    &    &    & $\times$ &    &    &    \\
20;31;10 &    &    &    & $\times$ &    &    &    & $\times$ &    &    &    &    \\
20;31;10;30 & $\times$ &    &    &    &    &    &    &    &    &    &    &    \\
20;31;30 &    &    &    & $\times$ &    &    &    & $\times$ &    &    &    &    \\
20;31;30;10 & $\times$ &    &    &    &    &    &    &    &    &    &    &    \\
30 &    & $\times$ &    &    &    & $\times$ &    &    &    &    &    &    \\
30;01 &    &    &    &    & $\times$ &    &    &    &    &    &    &    \\
30;01;20 &    &    &    & $\times$ &    &    &    &    &    &    &    &    \\
30;01;20;31 & $\times$ &    &    &    &    &    &    & $\times$ & $\times$ & $\times$ & $\times$ & $\times$ \\
30;01;31 &    &    &    & $\times$ &    &    &    & $\times$ & $\times$ & $\times$ & $\times$ & $\times$ \\
30;01;31;20 & $\times$ &    &    &    &    &    &    &    &    &    &    &    \\
30;20;01 &    &    &    &    & $\times$ &    &    &    &    &    &    &    \\
30;20;01;21;31 & $\checkmark$ &    & $\checkmark$ & $\checkmark$ &    &    &    & $\checkmark$ & $\checkmark$ & $\checkmark$ & $\checkmark$ & $\checkmark$ \\
30;20;01;31;21 & $\checkmark$ &    & $\checkmark$ &    &    &    &    & $\checkmark$ & $\checkmark$ & $\checkmark$ & $\checkmark$ & $\checkmark$ \\
30;20;21 &    &    &    &    & $\times$ &    &    &    &    &    &    &    \\
30;20;21;01;31 & $\checkmark$ &    & $\checkmark$ & $\checkmark$ &    &    &    & $\checkmark$ & $\checkmark$ & $\checkmark$ & $\checkmark$ & $\checkmark$ \\
30;20;21;31;01 & $\checkmark$ &    & $\checkmark$ &    &    &    &    & $\checkmark$ & $\checkmark$ & $\checkmark$ & $\checkmark$ & $\checkmark$ \\
30;20;31;01 &    &    &    & $\times$ &    &    &    &    &    &    &    &    \\
30;20;31;01;21 & $\times$ &    &    &    &    &    &    & $\times$ & $\times$ & $\times$ & $\times$ & $\times$ \\
30;20;31;21 &    &    &    & $\times$ &    &    &    &    &    &    &    &    \\
30;20;31;21;01 & $\times$ &    &    &    &    &    &    & $\times$ & $\times$ & $\times$ & $\times$ & $\times$ \\
30;31 &    &    &    &    & $\times$ &    &    &    &    &    &    &    \\
30;31;01 &    &    &    & $\times$ &    &    &    & $\times$ &    &    &    &    \\
30;31;01;20 & $\times$ &    &    &    &    &    &    &    &    &    &    &    \\
30;31;20 &    &    &    &    &    &    &    &    & $\times$ & $\times$ & $\times$ & $\times$ \\
30;31;20;01 &    &    &    & $\times$ &    &    &    & $\times$ &    &    &    &    \\
30;31;20;01;21 & $\times$ &    &    &    &    &    &    &    &    &    &    &    \\
30;31;20;21 &    &    &    & $\times$ &    &    &    & $\times$ &    &    &    &    \\
30;31;20;21;01 & $\times$ &    &    &    &    &    &    &    &    &    &    &    \\
31 &    &    &    &    &    & $\times$ &    &    &    &    &    &    \\
31;10 &    &    &    &    & $\times$ &    &    &    &    &    &    &    \\
31;10;20 &    &    &    & $\times$ &    &    &    & $\times$ & $\times$ & $\times$ & $\times$ & $\times$ \\
31;10;20;30 & $\times$ &    &    &    &    &    &    &    &    &    &    &    \\
31;10;30 &    &    &    & $\times$ &    &    &    & $\times$ &    &    &    &    \\
31;10;30;20 & $\times$ &    &    &    &    &    &    &    &    &    &    &    \\
31;20 &    &    &    &    & $\times$ &    &    &    & $\times$ & $\times$ & $\times$ & $\times$ \\
31;20;10 &    &    &    & $\times$ &    &    &    & $\times$ &    &    &    &    \\
31;20;10;30 & $\times$ &    &    &    &    &    &    &    &    &    &    &    \\
31;20;30 &    &    &    & $\times$ &    &    &    & $\times$ &    &    &    &    \\
31;20;30;10 & $\times$ &    &    &    &    &    &    &    &    &    &    &    \\
31;30 &    &    &    &    & $\times$ &    &    &    &    &    &    &    \\
31;30;10 &    &    &    & $\times$ &    &    &    & $\times$ &    &    &    &    \\
31;30;10;20 & $\times$ &    &    &    &    &    &    &    &    &    &    &    \\
31;30;20 &    &    &    & $\times$ &    &    &    & $\times$ & $\times$ & $\times$ & $\times$ & $\times$ \\
31;30;20;10 & $\times$ &    &    &    &    &    &    &    &    &    &    &    \\
\end{tabular}

  \caption{N Example~\ref{example:nExample}: Extensions of strengthenings.}\label{table:nExampleExtensions}
\end{table}

\begin{table}
  \centering
  \fontsize{9pt}{11.5pt}\selectfont
  \begin{tabular}{llllllllllllll}
  & $LNS$
  & $\cdot^\hard $
  & ${(\cdot^\hard)}^\stephard$
  & $\cdot^\soft$
  & $\cdot^\stephard$
  & $\cdot^{\stephard 2}$
  & $\cdot^{\stephard 3}$
  & $\cdot^{\stephard 4}$
  & $\cdot^\stepsoft$
  & $\cdot^{\stepsoft 2}$
  & $\cdot^{\stepsoft 3}$
  & ${(\cdot^\stepsoft)}^{\stephard 3}$\\
$\epsilon$ &    &    &    &    &    &    &    & $\times$ &    &    &    &    \\
01 &    &    &    &    &    & $\times$ &    &    &    &    &    &    \\
01;21 &    &    &    &    & $\times$ &    &    &    & $\times$ & $\times$ & $\times$ &    \\
01;21;30 & $\times$ &    &    &    &    &    &    &    &    &    &    &    \\
01;21;31 & $\times$ &    &    &    &    &    &    &    &    &    &    &    \\
01;30 &    &    &    &    & $\times$ &    &    &    &    &    &    &    \\
01;30;21 & $\times$ &    &    &    &    &    &    &    & $\times$ & $\times$ & $\times$ &    \\
01;31 &    &    &    &    & $\times$ &    &    &    &    &    &    &    \\
01;31;21 & $\times$ &    &    &    &    &    &    &    & $\times$ & $\times$ & $\times$ &    \\
21 &    &    &    &    &    & $\times$ &    &    &    &    &    &    \\
21;01 &    &    &    &    & $\times$ &    &    &    & $\times$ & $\times$ & $\times$ &    \\
21;01;30 & $\times$ &    &    &    &    &    &    &    &    &    &    &    \\
21;01;31 & $\times$ &    &    &    &    &    &    &    &    &    &    &    \\
21;30 &    &    &    &    &    &    &    &    &    & $\times$ & $\times$ &    \\
21;30;01 &    &    &    &    & $\times$ &    &    &    & $\times$ &    &    &    \\
21;30;01;31 & $\times$ &    &    &    &    &    &    &    &    &    &    &    \\
21;30;31 &    &    &    &    & $\times$ &    &    &    & $\times$ &    &    &    \\
21;30;31;01 & $\times$ &    &    &    &    &    &    &    &    &    &    &    \\
21;31 &    &    &    &    & $\times$ &    &    &    &    &    &    &    \\
21;31;01 & $\times$ &    &    &    &    &    &    &    & $\times$ & $\times$ & $\times$ &    \\
30 &    &    & $\times$ &    &    &    & $\times$ &    &    &    &    &    \\
30;01 &    & $\times$ &    &    &    & $\times$ &    &    &    &    &    &    \\
30;01;21;31 & $\checkmark$ &    &    & $\checkmark$ &    &    &    &    & $\checkmark$ & $\checkmark$ & $\checkmark$ &    \\
30;01;31;21 & $\checkmark$ &    &    & $\checkmark$ & $\checkmark$ &    &    &    & $\checkmark$ & $\checkmark$ & $\checkmark$ & $\checkmark$ \\
30;21;01 &    &    &    &    & $\times$ &    &    &    &    &    &    &    \\
30;21;01;31 & $\times$ &    &    & $\times$ &    &    &    &    & $\times$ & $\times$ & $\times$ &    \\
30;21;31 &    &    &    &    & $\times$ &    &    &    &    &    &    &    \\
30;21;31;01 & $\times$ &    &    & $\times$ &    &    &    &    & $\times$ & $\times$ & $\times$ &    \\
30;31 &    & $\times$ &    &    &    & $\times$ &    &    &    &    &    &    \\
30;31;01;21 & $\checkmark$ &    &    & $\checkmark$ & $\checkmark$ &    &    &    & $\checkmark$ & $\checkmark$ & $\checkmark$ & $\checkmark$ \\
30;31;21;01 & $\checkmark$ &    &    & $\checkmark$ &    &    &    &    & $\checkmark$ & $\checkmark$ & $\checkmark$ &    \\
31;01;21;30 & $\checkmark$ &    &    & $\checkmark$ &    &    &    &    & $\checkmark$ & $\checkmark$ & $\checkmark$ &    \\
31;01;30;21 & $\checkmark$ &    &    & $\checkmark$ & $\checkmark$ &    &    &    & $\checkmark$ & $\checkmark$ & $\checkmark$ &    \\
31;10;21;30 & $\checkmark$ &    &    & $\checkmark$ &    &    &    &    & $\checkmark$ & $\checkmark$ & $\checkmark$ &    \\
31;10;30;21 & $\checkmark$ & $\checkmark$ & $\checkmark$ & $\checkmark$ & $\checkmark$ & $\checkmark$ & $\checkmark$ &    & $\checkmark$ & $\checkmark$ & $\checkmark$ & $\checkmark$ \\
31;21;01;30 & $\checkmark$ &    &    & $\checkmark$ &    &    &    &    & $\checkmark$ & $\checkmark$ & $\checkmark$ &    \\
31;21;30 & $\checkmark$ &    &    & $\checkmark$ & $\checkmark$ &    &    &    & $\checkmark$ & $\checkmark$ & $\checkmark$ &    \\
31;30;10;21 & $\checkmark$ & $\checkmark$ & $\checkmark$ & $\checkmark$ & $\checkmark$ & $\checkmark$ & $\checkmark$ &    & $\checkmark$ & $\checkmark$ & $\checkmark$ & $\checkmark$ \\
31;30;21 & $\checkmark$ &    &    & $\checkmark$ &    &    &    &    & $\checkmark$ & $\checkmark$ & $\checkmark$ &    \\
\end{tabular}

  \caption{Diamond Example of Section~\ref{subsec:diamond}: Extensions of strengthenings,
  after $20$.}\label{table:DiamondExampleExtensions}
\end{table}

\clearpage

\interlinepenalty=10000 
\bibliographystyle{myplainurl}
\bibliography{ck-and-gossip}

\end{document}